\documentclass[envcountsame]{llncs}

\pagestyle{plain}
\usepackage[T1]{fontenc}

\usepackage{booktabs} 
\usepackage{diagbox}	
\usepackage{makecell} 
\usepackage{textcomp} 

\usepackage{amsmath}
\usepackage{amssymb}
\usepackage{stmaryrd}
\usepackage{centernot}
\usepackage[noadjust, nocompress]{cite}
\usepackage[retainorgcmds]{IEEEtrantools}
\usepackage{caption}

\usepackage[textwidth=3cm, shadow, obeyFinal]{todonotes}




\DeclareMathOperator{\eventually}{\mathbf{F}}
\DeclareMathOperator{\globally}{\mathbf{G}}
\DeclareMathOperator{\weventually}{\mathbf{\overline{F}}}
\DeclareMathOperator{\wglobally}{\mathbf{\overline{G}}}

\DeclareMathOperator{\history}{\vartriangleleft}
\DeclareMathOperator{\nextx}{\mathbf{X}}

\newcommand{\otoprule}{\midrule[\heavyrulewidth]}

\newcommand*{\until}{\mathbin{\mathbf{U}}}

\newcommand*{\since}{\mathbin{\mathbf{S}}}

\newcommand*{\ltl}{\textup{\textmd{\textsf{LTL}}}}
\newcommand*{\ltlhistory}{$\textup{\textmd{\textsf{LTL}}}_{\history}$}

\newcommand*{\mitl}{\textup{\textmd{\textsf{MITL}}}}

\newcommand*{\mtl}{\textup{\textmd{\textsf{MTL}}}}

\newcommand*{\msoone}{\textup{\textmd{\textsf{MSO[$<, +1$]}}}}

\newcommand*{\hmtl}{\textup{\textmd{\textsf{HyperMTL}}}}
\newcommand*{\hltl}{\textup{\textmd{\textsf{HyperLTL}}}}

\newcommand*{\hpctl}{$\textup{\textmd{\textsf{HyperPCTL}}}$}
\newcommand*{\hstl}{\textup{\textmd{\textsf{HyperSTL}}}}
\newcommand*{\qptl}{\textup{\textmd{\textsf{QPTL}}}}

\newcommand*{\ta}{\textup{\textmd{\textsf{TA}}}}
\newcommand*{\mia}{\textup{\textmd{\textsf{MIA}}}}

\newcommand*\A{\mathcal A} 
\newcommand*\B{\mathcal B} 

\newcommand*\transitions{\Delta}

\newcommand*\R{\mathbb R}

\newcommand*\N{\mathbb N}
\newcommand*\sem[1]{\ensuremath{\llbracket#1\rrbracket}}

\renewcommand*\phi{\varphi}
\newcommand*\AP{\textup{\textmd{\textsf{AP}}}}




\usetikzlibrary{positioning,automata,arrows,hobby,decorations.pathreplacing,fit}

\tikzstyle{every state}=[minimum size=1.5em]

\tikzset{>=latex,auto,node distance=2.5cm, every
  loop/.style={looseness=6}, initial text={}, inner sep=0.5mm,
  loopright/.style={loop,looseness=6,out=35, in=-35},
  loopleft/.style={loop,looseness=6,out=145, in=215},
  loopabove/.style={loop,looseness=6,out=125, in=55},
  loopbelow/.style={loop,looseness=6,out=-125, in=-55}, }

\makeatletter
\renewcommand{\paragraph}{\@startsection{paragraph}{6}{\z@}{2ex}{-0.7em}{\normalsize\bf}}
\makeatother

\let\doendproof\endproof
\renewcommand\endproof{~\hfill\qed\doendproof}

\newcommand{\qedhere}{%
  \begingroup \let\mathqed\math@qedhere
    \let\qed@elt\setQED@elt \QED@stack\relax\relax \endgroup
}

\title{On Verifying Timed Hyperproperties\thanks{This work was
supported by EPSRC grants EP/K026399/1 and EP/P020011/1.}}

\author{Hsi-Ming Ho, Ruoyu Zhou, and Timothy M. Jones}

\institute{
  University of Cambridge, United Kingdom
}

\begin{document}

\maketitle
\thispagestyle{plain}

\begin{abstract}
We study the satisfiability and model-checking problems for \emph{timed hyperproperties} specified with \hmtl{}, a timed extension of \hltl{}. Depending on whether interleaving of events in different traces is allowed, two possible semantics can be defined for timed hyperproperties: \emph{synchronous} and \emph{asynchronous}. While the satisfiability problem can be decided similarly as for \hltl{} regardless of the choice of semantics, we show that the model-checking problem, unless the specification is alternation-free, is undecidable even when very restricted timing constraints are allowed. On the positive side, we show that model checking \hmtl{} with quantifier alternations is possible under certain conditions
in the synchronous semantics, or when there is a fixed bound on the length of the time domain.
\end{abstract}

\section{Introduction} 

\paragraph{Background.}
One of the most popular specification formalisms for reactive systems
is \emph{Linear Temporal Logic} (\ltl{}), first introduced into computer science
by Pnueli~\cite{Pnueli1977} in the late 1970s.
The success of \ltl{} can be attributed 
to the fact that its satisfiability and model-checking problems are
of lower complexity ($\mathrm{PSPACE}$-complete, as compared with
non-elementary for the equally expressive first-order logic of order) and it enjoys simple
translations into automata and excellent tool support
(e.g.,~\cite{Holzmann1997, Cimatti2002}).

While \ltl{} is adequate for describing features of \emph{individual} execution traces,  
many security policies in practice are based on relations between \emph{two (or more)} execution traces.
A standard example of such properties is \emph{observational determinism}~\cite{Roscoe95, ZdancewicM03, HuismanWS06}: 
for every pair of execution traces, if the low-security inputs agree
in both execution traces, then the low-security outputs in both execution traces must agree as well.
Such properties are called \emph{hyperproperties}~\cite{ClarksonS10}: a model of the property is not
a single execution trace but a set of execution traces.
\hltl{}~\cite{ClarksonFKMRS14}, obtained from \ltl{} by adding \emph{trace quantifiers}, 
has been proposed as a specification formalism to express hyperproperties.
For example, operational determinism can be expressed as the \hltl{} formula:
\[
\forall \pi_a \, \forall \pi_b \, \wglobally (I_a = I_b) \Rightarrow \wglobally (O_a = O_b) \,.
\]
\hltl{} inherits almost all the benefits of \ltl{}; in particular, tools that support \hltl{} verification
can be built by leveraging existing tools for \ltl{}.

For many applications, however, in addition to the occurences and orders of events,
\emph{timing} has to be accounted as well.
For example, one may want to verify that in every execution trace of the system, whenever
a request $\mathtt{req}$ is issued, the corresponding acknowledgement $\mathtt{ack}$
is received within the next $5$ time units.
\emph{Timed automata}~\cite{AluDil94} and \emph{timed logics}~\cite{Koy90, AluHen94, AluFed96}
are introduced exactly for this purpose.
In the context of security, timing anomalies caused by different high-security inputs
is a realistic attack vector that can be exploited to obtain sensitive information;
this kind of \emph{timing side-channel attacks} also play significant roles in
high-profile exploits like Meltdown~\cite{Lipp0G0HFHMKGYH18} and Spectre~\cite{Kocher2018}.
In order to detect such undesired characteristics of systems, one needs to be able to reason about
\emph{timed hyperproperties}.
\begin{example}[\cite{LiSGFTO10}]
An AND gate with two inputs $A$, $B$ and an output $C$ and respective delays
$T_A$, $T_B$, and $T_C$ can be modelled as the timed automaton below (suppose that $T_A < T_B$
and $T_B - T_A < T_C$):
\vspace{-0.5cm}
\begin{figure}[h]
\centering
\scalebox{1}{
\begin{tikzpicture}[auto, >=stealth']
		\node[initial left ,state] (0) {};
		\node[state, above right= 2cm and 3cm of 0] (1) {};
		\node[state, right=2cm of 1] (2) {};
		\node[state, accepting, right=2cm of 2] (3) {};
		\node[state, right=3cm of 0] (4) {};
		\path
		(0) edge[->] node{\scriptsize $A^1, x = T_A$} (1)
		(1) edge[->] node[align=center]{\scriptsize $B^1, y = T_B$ \\ \scriptsize $z := 0$} (2)
		(1) edge[->] node[align=center]{\scriptsize $B^0, y = T_B$ \\ \scriptsize $z := 0$} (4)
		(2) edge[->] node{\scriptsize $C^1, z = T_C$} (3)
		(0) edge[->] node[align=center]{\scriptsize $A^0, x = T_A$ \\ \scriptsize $z := 0$} (4)
		(4) edge[loopbelow, ->] (4)
		(4) edge[->] node[swap]{\scriptsize $C^0, z = T_C$} (3);
\end{tikzpicture}
}
\end{figure}
\vspace{-0.5cm}

\noindent Of course, once $A$ turned out to be $0$ (i.e.~$A^0$ has happened), the output $C$ must be $0$ as well.
But the time when $C_0$ happens (assuming $C = 0$) also depends on the value of $A$;
in other words, in addition to the value of $C$, a low-security user can, when $C = 0$, also infer the value of $A$ (while she or he should not be able to).
In this simple example, however, the timing side channel can be removed
by adding $z := 0$ on the self-loop on the lower-right location.

\end{example}


\paragraph{Contributions.}
We propose \hmtl{},
obtained by adding trace quantifiers to \mtl{}, 
as a specification formalism for timed hyperproperties.
We consider systems modelled as \emph{timed automata}, and thus system behaviours are
sequences of \emph{events} that happen at different instants in time;
this gives two possible pointwise semantics of \hmtl{}: \emph{asynchronous} and \emph{synchronous}.
We show that, as far as satisfiability is concerned, \hmtl{} 
is similar to \hltl{}, i.e.~satisfiability is decidable for fragments not containing $\forall \exists$,
regardless of which semantics is assumed.
However, in contrast with \hltl{} (whose model-checking problem is decidable),  
model checking \hmtl{} is undecidable if there is at least one quantifier alternation
in the specification, even when the timing constraints used in either the system or
the specification are very restricted. 
Still, the alternation-free fragment of \hmtl{}, which is arguably sufficient
to capture many timed hyperproperties of practical interest, has a decidable model-checking problem. 
Finally, we identify several subcases where \hmtl{} model checking is decidable
for larger fragments, such as
when the synchronous semantics is assumed, the model is untimed, and the specification belongs to a certain subclass of
one-clock timed automata, or when the time domain is bounded \emph{a priori} by some $N \in \mathbb{N}_{> 0}$.

\paragraph{Related work.}
Since the pioneering work of Clarkson and Schneider~\cite{ClarksonS10},
there have been great interest in specifying and verifying hyperproperties
in the past few years. The framework based on \hltl{}~\cite{ClarksonFKMRS14} is possibly
the most popular for this purpose, thanks to its
expressiveness, flexibility, and relative ease of implementation.
In addition to satisfiability~\cite{FinkbeinerHS17, FinkbeinerHH18} and model checking~\cite{ClarksonFKMRS14, FinkbeinerRS15}, 
tools for monitoring \hltl{} also exist~\cite{AgrawalB16, FinkbeinerHST17, FinkbeinerHST18}.

Our formulation of \hmtl{} is very closely related to 
\hstl{}~\cite{NguyenKJDJ17} originally proposed in the context of quality assurance
of \emph{cyber-physical systems}.
While~\cite{NguyenKJDJ17} focusses on testing,
we are concerned with the decidability of verification problems. 
On the other hand, the semantics of \hstl{} is defined over sets of 
continuous signals, i.e.~state-based; 
as noted in~\cite{NguyenKJDJ17}, however, the price to pay for the extra generality is
that implementing a model checker for \hstl{} is very difficult,
especially for systems modelled in proprietary frameworks (such as Simulink\textregistered{}).
Practical reasoning of \hmtl{}, by contrast, can be carried out easily with 
existing highly optimised timed automata verification back ends, e.g.,~\textsc{Uppaal}~\cite{LarPet97}.
Indeed, a prototype model checker based on \textsc{Uppaal} for the synchronous semantics of \hmtl{}
(with some restrictions) is reported in~\cite{Heinen2018}, although it does not consider the decidability
of verification problems. Another relevant work~\cite{GerkingSB18}, also based on \textsc{Uppaal},
checks noninterference in systems modelled as timed automata (similar to Example~\ref{ex:noninterference}; see below).
Their approach, however, is specifically tailored to noninterference and does not generalise.


It is also possible to extend hyperlogics in other quantitative dimensions orthogonal to time.
\hpctl{}~\cite{AbrahamB18} can express \emph{probabilisitic hyperproperties}, e.g.,~the
probability distribution of the low-security outputs are independent of the high-security inputs.
In~\cite{FinkbeinerHT18}, specialised algorithms are developed for verifying \emph{quantitative hyperproperties},
e.g.,~there is a bound on the number of traces with the same low-security inputs but different low-level outputs. 
The current paper is complementary to these works.

\section{Timed hyperproperties}


\paragraph{Timed words.}
A \emph{timed word} (or a \emph{trace}) over a finite alphabet $\Sigma$ is a finite sequence
of \emph{events} $(\sigma_1,\tau_1)\dots(\sigma_n, \tau_n) \in (\Sigma \times \R_{\geq 0})^\ast$
with $\tau_1 \dots \tau_n$ an increasing
sequence of non-negative real numbers (`\emph{timestamps}'), i.e.~$\tau_i < \tau_{i+1}$ for all $i$, $1 \leq i < n$.
For $t \in \R_{\geq 0}$ and a timed word $\rho = (\sigma_1,\tau_1)\dots(\sigma_n, \tau_n)$,
we write $t \in \rho$ iff $t = \tau_i$ for some $i$, $1 \leq i \leq n$.
We denote by $T\Sigma^\ast$ the set of all timed words over $\Sigma$.
A \emph{timed language} (or a \emph{trace property}) is a subset of $T\Sigma^\ast$.

\paragraph{Timed automata.}
Let $X$ be a finite set of \emph{clocks}
($\R_{\geq 0}$-valued variables).
A \emph{valuation} $v$ for $X$ maps each clock $x \in X$ to a value in $\R_{\geq 0}$.
The set $G(X)$ of \emph{clock constraints} (\emph{guards}) $g$ over $X$ is generated
by $g:= \top\mid g\land g \mid x\bowtie c$ where
${\bowtie}\in \{{\leq},{<},{\geq},{>}\}$, $x\in X$, and $c\in\N_{\geq 0}$.
The satisfaction of a guard $g$ by a valuation $v$ (written $v \models g$) is
defined in the usual way.
For $t\in\R_{\geq 0}$, we let $v +t$
be the valuation defined by $(v +t)(x) = v (x)+t$ for all $x\in
X$. For $\lambda \subseteq X$, we let $v [\lambda \leftarrow 0]$ be the valuation
defined by $(v[\lambda \leftarrow 0])(x) = 0$ if $x\in \lambda$, and
$(v[\lambda \leftarrow 0])(x) = v (x)$ otherwise.

A \emph{timed automaton} (\ta{}) over $\Sigma$ is a tuple
$\A = \langle \Sigma, S, s_0, X, \transitions, F \rangle$ where $S$ is a finite set of
locations, $s_0 \in S$ is the initial location, $X$ is a finite set of clocks,
$\transitions \subseteq S \times \Sigma \times G(X) \times
2^X \times S$ is the transition relation,
and $F$ is the set of accepting locations.
We say that $\A$ is \emph{deterministic} iff for each $s \in S$ and $\sigma \in \Sigma$ and
every pair of transitions $(s, \sigma, g^1, \lambda^1, s^1) \in \transitions$ and $(s, \sigma, g^2, \lambda^2, s^2) \in \transitions$, $g^1 \land g^2$ is not satisfiable.
A \emph{state} of $\A$ is a pair $(s, v)$
of a location $s \in S$ and a valuation $v$ for $X$.
A \emph{run} of $\A$ on a timed word
$(\sigma_1,\tau_1)\dots(\sigma_n, \tau_n) \in T\Sigma^\ast$ is a
sequence of states $(s_0,v_0)\dots(s_n,v_n)$ where (i)
$v_0(x)=0$ for all $x \in X$ and (ii) for each $i$, $0 \leq i < n$,
there is a transition $(s_i,\sigma_{i+1},g,\lambda,s_{i+1})$
such that 
$v_i +(\tau_{i+1}-\tau_i)\models g$ (let $\tau_0=0$) and
$v_{i+1} =\big(v_i +(\tau_{i+1}-\tau_i)\big)[\lambda \leftarrow 0]$. 
A run of $\A$ is \emph{accepting} iff
it ends in a state $(s, v)$ with $s \in F$. A timed word
is \emph{accepted} by $\A$ iff $\A$ has an accepting run on it.
We denote by $\sem{\A}$ the timed language of $\A$,
i.e.~the set of all timed words accepted by $\A$.
Two fundamental results on \ta{s} are that the \emph{emptiness} problem 
is decidable ($\mathrm{PSPACE}$-complete), but the \emph{universality} problem
is undecidable~\cite{AluDil94}.

\paragraph{Timed logics.}

The set of \mtl{} formulae over a finite set of atomic propositions $\AP$ 
are generated by 
\begin{displaymath}
  \psi := \top \mid p \mid \psi_1 \land \psi_2 \mid \neg\psi \mid \psi_1 \until_I \psi_2 \mid \psi_1 \since_I \psi_2
\end{displaymath}
where $p\in\AP$ 
and $I \subseteq \R_{\geq 0}$ is a \emph{non-singular} interval with endpoints in $\N_{\geq 0} \cup\{\infty\}$.\footnote{In the literature, this logic (with the requirement that constraining intervals must be non-singular) is usually referred to as \mitl{}~\cite{AluFed96}, but we simply call it \mtl{} in this paper for notational simplicity. Also note that our undecidability results carry over
to the fragment with only future operators.}
We omit the subscript $I$ when $I = [0, \infty)$ and sometimes write pseudo-arithmetic expressions
for constraining intervals, e.g.,~`$< 3$' for $[0, 3)$.
The other Boolean operators are defined as usual:
$\bot \equiv \neg\top$ and $\psi_1\lor\psi_2 \equiv \neg(\neg\psi_1\land\neg\psi_2)$.
We also define the dual temporal operators
$\psi_1 \widetilde{\until}_I \psi_2 \equiv \neg \big( (\neg \psi_1) \until_I (\neg \psi_2) \big)$
and $\psi_1 \widetilde{\since}_I \psi_2 \equiv \neg \big( (\neg \psi_1) \since_I (\neg \psi_2) \big)$.
Using these operators, every \mtl{} formula $\psi$ can be transformed into an \mtl{} formula $\mathit{nnf}(\psi)$ in \emph{negative normal form},
i.e.~$\neg$ is only applied to atomic propositions.
To ease the presentation, we will also use the usual shortcuts like
$\eventually_I\psi \equiv \top\until_I\psi$, $\globally_I\psi \equiv \neg\eventually_I\neg\psi$,
$\nextx \psi \equiv \bot \until \psi$, and `weak-future' variants of temporal operators, e.g.,~$\weventually \psi \equiv \psi \vee \eventually \psi$.
Given an \mtl{} formula $\psi$ over $\AP$, 
a timed word $\rho$ over $\Sigma_\AP = 2^\AP$, and $t \in \R_{\geq 0}$, we define the \mtl{} satisfaction relation $\models$ as follows:\footnote{The formulation of the pointwise semantics of \mtl{}
here deviates slightly from the standard one (cf.~\cite{AluHen93, OuaWor07})
to enable interleaving of events in different traces.}
\begin{itemize}
\item $(\rho, t)\models \top$ iff $t \in \rho$;
\item $(\rho, t)\models \bot$ iff $t \notin \rho$;
\item $(\rho, t)\models p$ iff $t \in \rho$ and $p\in \sigma_i$;
\item $(\rho, t)\models \neg p$ iff $t \in \rho$ and $p\notin \sigma_i$;
\item $(\rho, t)\models \psi_1\land \psi_2$ iff $(\rho,t)\models\psi_1$ and 
$(\rho,t)\models\psi_2$;
\item $(\rho, t)\models \psi_1\lor \psi_2$ iff $(\rho,t)\models\psi_1$ or 
$(\rho,t)\models\psi_2$;
\item $(\rho,t)\models \psi_1 \until_I \psi_2$ iff there exists
  $t' > t$ such that $t'-t \in I$, $(\rho, t') \models \top$, $(\rho, t') \models \psi_2$, and $(\rho, t'') \models \psi_1$ for all $t''$ such that
$t'' \in (t, t')$ and $(\rho, t'') \models \top$;
\item $(\rho,t)\models \psi_1 \widetilde{\until}_I \psi_2$ iff for all
  $t' > t$ such that $t'-t \in I$ and $(\rho, t') \models \top$, either $(\rho, t') \models \psi_2$ or $(\rho, t'') \models \psi_1$ for some $t''$ such that
$t'' \in (t, t')$ and $(\rho, t'') \models \top$;
\item $(\rho,t)\models \psi_1 \since_I \psi_2$ iff there exists
  $t'$, $0 \leq t' < t$ such that $t-t' \in I$, $(\rho, t') \models \top$, $(\rho, t') \models \psi_2$, and $(\rho, t'') \models \psi_1$ for all $t''$ such that
$t'' \in (t', t)$ and $(\rho, t'') \models \top$;
\item $(\rho,t)\models \psi_1 \widetilde{\since}_I \psi_2$ iff for all
  $t'$, $0 \leq t' < t$ such that $t-t' \in I$ and $(\rho, t') \models \top$, either $(\rho, t') \models \psi_2$ or $(\rho, t'') \models \psi_1$ for some $t''$ such that
$t'' \in (t', t)$ and $(\rho, t'') \models \top$;
\item $(\rho, t)\models \neg \psi$ iff $(\rho, t) \models \mathit{nnf}(\neg \psi)$.
\end{itemize}
We say that $\rho$ \emph{satisfies} $\psi$ ($\rho\models\psi$)
iff $(\rho,0)\models\psi$, and we write
$\sem\psi$ for the timed language of $\psi$, i.e.~the set of all timed words satisfying $\psi$. 
It is well known that any \mtl{} formula
can be translated into a \ta{} accepting the same timed language~\cite{Alur1992}; this implies that \mtl{} satisfiability is decidable ($\mathrm{EXPSPACE}$-complete).

\paragraph{Adding trace quantifiers.}
Let $V$ be an infinite supply of \emph{trace variables}, the set of \hmtl{} formulae over $\AP$ are generated by
\begin{IEEEeqnarray*}{rCl}
  \phi & := & \exists \pi \, \phi \mid \forall \pi \, \phi \mid \psi \\
  \psi & := & \top \mid \top_\pi \mid p_\pi \mid \psi_1 \land \psi_2 \mid \neg\psi \mid \psi_1 \until_I \psi_2 \mid \psi_1 \since_I \psi_2
\end{IEEEeqnarray*}
where $\pi \in V$, $p\in\AP$, and $I \subseteq \R_{\geq 0}$ is a non-singular interval with endpoints in $\N_{\geq 0} \cup\{\infty\}$ (to ease the notation, we will usually write, e.g.,~$p_a$ for $p_{\pi_a}$).
Without loss of generality we forbid the reuse of trace variables, i.e.~each trace quantifier
must use a fresh trace variable.
Syntatic sugar is defined as in \mtl{}, e.g.,~$\eventually_I\psi \equiv \top\until_I\psi$.
A \hmtl{} formula is \emph{closed} if it does not have
free occurrences of trace variables.
Following~\cite{FinkbeinerH16}, we refer to fragments of
\hmtl{} by their quantifier patterns, e.g.,~$\exists^\ast \forall^\ast$-\hmtl{}.
Finally, note that trace quantifiers can be added to \ta{s} in the same manner.

In contrast with \ta{s} and \mtl{} formulae, which define \emph{trace properties},
\hmtl{} formulae define \emph{(timed) hyperproperties}, i.e.~sets of trace properties.
Depending on whether one requires timestamps in quantified traces to match exactly (i.e.~all quantified traces
must \emph{synchronise}), two possible semantics can be defined accordingly.

\paragraph{Asynchronous semantics.}
A \emph{trace assignment} over $\Sigma$ is a partial mapping from
$V$ to $T\Sigma^\ast$. We write $\Pi_\emptyset$ for the empty trace assignment
and $\Pi[\pi \mapsto \rho]$ for the trace assignment
that maps $\pi$ to $\rho$ and $\pi'$ to $\Pi(\pi')$ for all $\pi' \neq \pi$.
Given a \hmtl{} formula $\phi$ over $\AP$, 
a trace set $T$ over $\Sigma_\AP$, a trace assignment $\Pi$ over $\Sigma_\AP$, and $t \in \R_{\geq 0}$,
we define the \hmtl{} \emph{asynchronous}
satisfaction relation $\models$ as follows (we omit the cases where the definitions are obvious):
\begin{itemize}
\item $(T, t)\models_\Pi \top$ iff $t \in \rho$ for some $\rho \in \mathit{range}(\Pi)$;
\item $(T, t)\models_\Pi \top_\pi$ iff $t \in \rho$ for $\rho = \Pi(\pi)$;
\item $(T, t)\models_\Pi p_\pi$ iff $t \in \rho$ for $\rho = \Pi(\pi)$ and $p \in \sigma_i$
for the event $(\sigma_i, t)$ in $\rho$;
\item $(T, t) \models_\Pi \exists \pi \, \phi$ iff there is a trace $\rho \in T$ such that $(T, t) \models_{\Pi[\pi \mapsto \rho]} \phi$;
\item $(T, t) \models_\Pi \forall \pi \, \phi$ iff for all traces $\rho \in T$, $(T, t) \models_{\Pi[\pi \mapsto \rho]} \phi$.
\end{itemize}
We say that $T$ \emph{satisfies} a closed \hmtl{} formula $\phi$ in the asynchronous semantics ($T \models \phi$)
iff $(T,0) \models_{\Pi_\emptyset} \phi$. 
\begin{example}[Noninterference in event-based systems~\cite{GogMes82}]\label{ex:noninterference}
A system operating on sequences of commands issued by different users 
can be modelled as a deterministic finite automaton $\mathcal{A}$
over $\Sigma = U \times C$ where $U$ is the set of users and
$C$ is the set of commands. Additionally, let $\mathsf{Obs}$ be the set of observations
and $\mathit{out}: S \times U \rightarrow \mathsf{Obs}$ be the observation function
for what can be observed at each location by each user.
Let there be a partition of $U$ into two disjoint sets of users $U_H \subseteq U$ and $U_L \subseteq U$.
\emph{Noninterference} requires that for each $w \in \Sigma^\ast$
where $w$ ends with a command issued by a user in $U_L$ and $\A$ reaches $s$ after reading $w$,
the subsequence $w'$ obtained by removing all the commands issued by the users in $U_H$
results in a location $s'$ such that the observation $\mathit{out}(s', u_L)$ of each user $u_L \in U_L$ is identical to $\mathit{out}(s, u_L)$.
For our purpose, we can combine $\A$ and $\mathit{out}$ (in the expected way) into an automaton $\A'$ over $\Sigma_\AP$
where $\AP = (U \times C) \uplus (U \times \mathsf{Obs})$ (atomic propositions
in $U \times \mathsf{Obs}$ reflect the observations
at the location that has just been entered).
Checking noninterference then amounts to model checking $\A'$ (whose locations are all accepting)
against the following \hmtl{} formula in the asynchronous semantics:
\[
\phi = \forall \pi_a \, \forall \pi_b \, \big(\wglobally(\top_b \Rightarrow \psi^L_b \wedge \psi^{=}_{U, C}) \wedge \wglobally (\top_a \wedge \bot_b \Rightarrow \psi^H_a) \Rightarrow \wglobally(\top_b \Rightarrow \psi^{=}_{\mathit{out}(U)})\big)
\]
(where $\psi^L_b$ asserts that the command in $\pi_b$ is issued by a user in $U_L$, 
$\psi^=_{U, C}$ says that the two synchronised commands in $\pi_a$ and $\pi_b$ agree on $U$ and $C$, etc.).
Compared with the state-based formulation in~\cite{ClarksonFKMRS14}, allowing interleaving of events
leads to a much simpler and clearer specification.
\end{example}

\paragraph{Synchronous semantics.} A less general semantics can be defined for \hmtl{} formulae
where each trace quantifier only ranges over traces that synchronise with the
traces in the current trace assignment.
For example, the second quantifier in $\exists \pi_a \, \exists \pi_b \, \psi$ requires $\pi_b$ to satisfy $(\pi_a, t) \models \top_a \Leftrightarrow (\pi_b, t) \models \top_b$ for all $t \in \R_{\geq 0}$.
The \hmtl{} \emph{synchronous} satisfaction relation $\models^\mathit{sync}$
can, in fact, be expressed in the asynchronous semantics
by explicitly requiring newly quantified traces to synchronise in the quantifier-free part of the formula.
More precisely, for a closed \hmtl{} formula
$\phi = \mathcal{Q} \, \phi'$
where $\mathcal{Q}$ denotes
a block of quantifiers of the same type (i.e.~all existential or
all universal) and $\phi'$ is a possibly open \hmtl{} formula,
and a set $V$ of trace variables, let 
(abusing notation slightly) $\mathit{sync}(\phi, V) = \mathcal{Q} \, \big(\wglobally(\bigwedge_{\pi \in \mathcal{Q} \cup V} \top_\pi) \wedge \mathit{sync}(\phi', \mathcal{Q} \cup V)\big)$ when $\mathcal{Q}$ are
existential,
$\mathit{sync}(\phi) = \mathcal{Q} \, \big(\wglobally(\bigwedge_{\pi \in \mathcal{Q} \cup V} \top_\pi) \Rightarrow \mathit{sync}(\phi', \mathcal{Q} \cup V)\big)$ when $\mathcal{Q}$ are
 universal,
 and $\mathit{sync}(\psi, V) = \psi$ when $\psi$ is quantifier-free.
The following lemma holds subject to rewriting
the formula into prenex normal form.
\begin{lemma}\label{lem:syncinasync}
For any trace set $T$ over $\Sigma_\AP$ and closed \hmtl{} formula $\phi$ over $\AP$, $T \models^\mathit{sync}  \phi$ iff $T \models \mathit{sync}(\phi, \emptyset)$.
\end{lemma}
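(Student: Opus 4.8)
The plan is to prove the equivalence by induction on the number of quantifier blocks of $\phi$ (which, as the statement allows, we may assume to be in prenex normal form), after strengthening it so that the induction hypothesis also covers \emph{open} formulas evaluated under non-empty trace assignments. First I would make $\models^\mathit{sync}$ precise: say that two traces $\rho,\rho'$ \emph{synchronise} if $t \in \rho \Leftrightarrow t \in \rho'$ for every $t \in \R_{\geq 0}$, and note that synchronisation is an equivalence relation. The relation $\models^\mathit{sync}$ then coincides with $\models$ on every construct except the quantifiers, where $\exists \pi$ and $\forall \pi$ range only over those $\rho \in T$ that synchronise with every trace already in $\mathit{range}(\Pi)$. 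Because the prefix is prenex and the quantifier clauses do not move the evaluation point, all of these choices, and all the synchronisation constraints inserted by $\mathit{sync}$, are checked at time $0$.

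The technical heart is a \emph{synchronisation-capture} claim: for any assignment $\Pi$ with domain $V$, we have $(T,0) \models_\Pi \wglobally\big(\bigwedge_{\pi \in V} \top_\pi\big)$ iff all traces in $\mathit{range}(\Pi)$ pairwise synchronise. The point is that, in the interleaving pointwise semantics, the temporal operators only ever quantify over instants at which $\top$ holds under $\Pi$, i.e.~instants carrying an event of \emph{some} trace in $\mathit{range}(\Pi)$; at any such instant $\bigwedge_{\pi\in V}\top_\pi$ holds iff \emph{every} assigned trace has an event there, and the weak operator $\wglobally$ additionally accounts for the present instant. Hence the formula holds at $0$ exactly when, at every event instant of the assignment, all assigned traces carry an event — which is precisely mutual synchronisation over all of $\R_{\geq 0}$.

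With this in hand I would prove, by induction on the block count, that for every prenex $\phi$ with free variables contained in $V$, every $T$, and every $\Pi$ with domain $V$ whose range synchronises, $(T,0) \models^\mathit{sync}_\Pi \phi$ iff $(T,0) \models_\Pi \mathit{sync}(\phi, V)$. The base case, $\phi$ quantifier-free, is immediate since the two relations are defined identically on quantifier-free formulas (a routine structural sub-induction). For the step, write $\phi = \mathcal{Q}\,\phi'$ with $\mathcal{Q}$ a maximal block $Q\pi_1 \dots Q\pi_k$ of a single type, and set $\Pi' = \Pi[\pi_1\mapsto\rho_1]\dots[\pi_k\mapsto\rho_k]$ and $V' = V \cup \mathcal{Q}$. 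In the existential case both sides reduce to the existence of $\rho_1,\dots,\rho_k \in T$ for which a synchronisation condition holds together with $(T,0)\models^\mathit{sync}_{\Pi'}\phi'$ (respectively $(T,0)\models_{\Pi'}\mathit{sync}(\phi',V')$); the synchronisation-capture claim identifies the conjoined $\wglobally\big(\bigwedge_{\pi\in V'}\top_\pi\big)$ with ``$\mathit{range}(\Pi')$ synchronises'', and since synchronisation is an equivalence relation and $\mathit{range}(\Pi)$ already synchronises, this matches exactly the incremental, per-quantifier restriction imposed by $\models^\mathit{sync}$. As $\mathit{range}(\Pi')$ again synchronises, the induction hypothesis applies to $\phi'$. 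The universal case is identical except that the constraint appears as the antecedent of an implication, so that tuples failing to synchronise make the implication vacuously true — matching the fact that $\models^\mathit{sync}$ simply does not range over them. Instantiating at $V = \emptyset$ and $\Pi = \Pi_\emptyset$ (whose range synchronises trivially) yields the lemma.

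The main obstacle I anticipate is the synchronisation-capture claim, and within it two delicate points: first, pinning down the boundary behaviour of $\wglobally$ so that the constraint genuinely captures synchronisation over \emph{all} of $\R_{\geq 0}$, the present instant included, despite the non-standard reading of $\top$ at non-event points; and second, in the inductive step, justifying that a single global constraint placed \emph{after} an entire block faithfully simulates the per-quantifier synchronisation of $\models^\mathit{sync}$ — this is where transitivity of synchronisation and the invariant that $\mathit{range}(\Pi)$ already synchronises are essential. Everything else is bookkeeping.
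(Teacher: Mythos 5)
The paper itself gives no proof of this lemma (it is simply asserted to hold ``subject to rewriting the formula into prenex normal form''), so there is nothing to compare against line by line; your block-by-block induction --- quantifier-free base case where the two relations coincide, the invariant that $\mathit{range}(\Pi)$ pairwise synchronises, and the observation that the per-quantifier constraints of $\models^\mathit{sync}$ collapse by transitivity into a single block-level constraint --- is exactly the routine argument the authors leave implicit, and its skeleton is sound.

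The one step that does not hold up as you state it is the synchronisation-capture claim, which is precisely the point you flagged as delicate. Under the paper's literal shorthand ($\weventually\chi \equiv \chi \vee \eventually\chi$, hence dually $\wglobally\chi \equiv \chi \wedge \globally\chi$), the formula $\wglobally\bigl(\bigwedge_{\pi \in V}\top_\pi\bigr)$ evaluated at time $0$ contains $\bigwedge_{\pi \in V}\top_\pi$ as an outright conjunct at $0$, and $\top_\pi$ is false at any instant that is not an event of $\Pi(\pi)$; so the formula \emph{forces every assigned trace to have an event at time $0$}, which is strictly stronger than mutual synchronisation. Concretely, for $T = \{(\{p\},1)\}$ and $\phi = \exists \pi_a \, \weventually p_a$ one has $T \models^\mathit{sync} \phi$ but $T \not\models \mathit{sync}(\phi, \emptyset)$, since $\wglobally \top_a$ fails at $0$; dually, universal blocks become vacuously true over such $T$. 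Thus under the literal reading your claim --- and indeed the lemma itself --- fails in this corner case; both become correct under the relativised reading $\wglobally\chi \equiv (\top \Rightarrow \chi) \wedge \globally\chi$, where $\wglobally$ constrains only current-or-future \emph{event} instants, which is evidently what the authors intend (it is the only reading under which $\mathit{sync}$ does what it is designed to do). You should make that reading explicit before invoking the capture claim; with it in place, the remainder of your argument (base case, equivalence-relation bookkeeping, existential/universal duality) goes through as written.
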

While the synchronous semantics may seem quite restricted (intuitively, the chance that
two random traces in a timed system have exactly the same timestamps is certainly slim!), one can argue
that it already suffices for many applications if \emph{stuttering steps}
are allowed.
We will see later that for alternation-free \hmtl{},
the asynchronous semantics can `simulated' in the synchronous semantics
using a `weak inverse' of Lemma~\ref{lem:syncinasync}.

\paragraph{Satisfiability and model checking.}
Given a closed \hmtl{} formula $\phi$ over $\AP$, the \emph{satisfiability}
problem asks whether there is a \emph{non-empty} trace set $T \subseteq T\Sigma_\AP^\ast$
satisfying it, i.e.~$T \models \phi$ (or
$T \models^\mathit{sync} \phi$, if the synchronous
semantics is assumed).
Given a \ta{} $\A$ over $\Sigma_\AP$ and a closed \hmtl{} formula $\phi$ over $\AP$,
the \emph{model-checking} problem asks whether $\sem{\A} \models \phi$ (or $\sem{\A} \models^\mathit{sync} \phi$).
Our focus in this paper is on the \emph{decidability} of these problems,
as their complexity (when they are decidable) follow straightforwardly 
from standard results on \mtl{}~\cite{AluFed96} and \hltl{}~\cite{ClarksonFKMRS14, FinkbeinerH16}.

%
%

\section{Satisfiability}\label{sec:satisfiability}

To `emulate' interleaving of events (of a concurrent or distributed system, say) in a synchronous, state-based setting, it is natural and necessary to introduce
stuttering steps.
In the context of verification, it is often a desirable trait
for a temporal logic to be \emph{stutter-invariant}~\cite{Lamp83a, Kucera05},
so that it cannot be used to differentiate traces that are essentially `the same'.
As a simple attempt to reconcile the asynchronous and synchronous semantics
of \hmtl{}, we can make use of \emph{silent events} in the same spirit to 
enable synchronisation of interleaving traces
while preserving the semantics.
More precisely, let $\mathit{stutter}(\rho)$ for a trace $\rho \in T\Sigma_\AP^\ast$ be the maximal set of traces $\rho' \in T\Sigma_{\AP_\epsilon}^\ast$ 
($\AP_\epsilon = \AP \cup \{ p^\epsilon \}$) such that
\begin{itemize}
\item for every event $(\sigma_i, \tau_i)$ in $\rho'$, either $\sigma_i = \{ p^\epsilon \}$ or $p^\epsilon \notin \sigma_i$;
\item $\rho$ can be obtained from $\rho'$ by deleting all the $\{ p^\epsilon \}$-events.
\end{itemize}
This extends to trace sets $T \subseteq T\Sigma_\AP^\ast$ in the obvious way.
For a closed alternation-free \hmtl{} formula $\phi = \mathcal{Q} \psi$ over $\AP$, let $\mathit{stutter}(\phi) = \mathcal{Q} \psi''$ be 
the \hmtl{} formula over $\AP_\epsilon$ obtained by (recursively) replacing in $\psi$,
e.g.,~all subformulae $\psi_1 \until_I \psi_2$ with 
$(\bigvee_{\pi \in \mathcal{Q}} \neg p_\pi^\epsilon \Rightarrow \psi_1) \until_I (\bigvee_{\pi \in \mathcal{Q}} \neg p_\pi^\epsilon \wedge \psi_2)$, to give $\psi'$, and finally let $\psi'' = \wglobally(\bigvee_{\pi \in \mathcal{Q}} \neg p^\epsilon_\pi ) \wedge \big(\bigwedge_{\pi \in \mathcal{Q}} \wglobally(p^\epsilon_\pi \Rightarrow \bigwedge_{p \in \AP} \neg p_\pi)\big) \wedge \psi'$
when $\mathcal{Q}$ are existential and
$\psi'' = \wglobally(\bigvee_{\pi \in \mathcal{Q}} \neg p^\epsilon_\pi ) \wedge \big(\bigwedge_{\pi \in \mathcal{Q}} \wglobally(p^\epsilon_\pi \Rightarrow \bigwedge_{p \in \AP} \neg p_\pi)\big) \Rightarrow \psi'$
when $\mathcal{Q}$ are universal.
Intuitively, $\psi''$ ensures that the traces involved are \emph{well-formed} (i.e.~satisfy the first
condition above), and
its own satisfaction is insensitive to the addition of silent events. The following lemma follows from a simple
structural induction.

\begin{lemma}\label{lem:stuttering}
For any trace set $T$ over $\Sigma_\AP$ and closed alternation-free \hmtl{} formula $\phi = \mathcal{Q} \psi$ over $\AP$ ($\mathcal{Q}$ is either a block of existential quantifiers or universal quantifiers and $\psi$ is quantifier-free), $T \models \phi$ iff $\mathit{stutter}(T) \models^\mathit{sync} \mathit{stutter}(\phi)$.
\end{lemma}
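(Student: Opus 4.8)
The plan is to prove the equivalence by a structural induction on the quantifier-free body $\psi$, after first setting up a tight correspondence between asynchronous trace assignments over $T$ and the synchronised, well-formed assignments over $\mathit{stutter}(T)$ that the synchronous semantics ranges over. Given traces $\rho_1, \dots, \rho_k \in T$ assigned to $\pi_1, \dots, \pi_k$, let $\Theta = \bigcup_i \{\tau : \tau \in \rho_i\}$ be the union of their timestamps, and let the \emph{canonical padding} $\rho_i'$ of $\rho_i$ be the unique trace in $\mathit{stutter}(\rho_i)$ whose set of timestamps is exactly $\Theta$ (so a silent event $\{p^\epsilon\}$ is inserted at every $\tau \in \Theta$ that is not already a timestamp of $\rho_i$). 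The canonical paddings all share the timestamp set $\Theta$, hence synchronise. The key observation is that the well-formedness conjunct $\wglobally(\bigvee_{\pi \in \mathcal{Q}} \neg p^\epsilon_\pi)$ of $\mathit{stutter}(\phi)$ holds of a synchronised tuple precisely when every common timestamp carries a non-silent event in at least one trace, which forces the common timestamp set to equal $\Theta$; consequently the well-formed synchronised tuples in $\mathit{stutter}(T)$ are exactly the canonical paddings, and deleting silent events sets up a bijection between them and the tuples $(\rho_1, \dots, \rho_k) \in T^k$. Because $\phi$ is alternation-free, the whole block $\mathcal{Q}$ synchronises to this single common $\Theta$, which is what makes the bijection clean -- with an alternation the inner block would synchronise to a choice-dependent refinement of $\Theta$ and this correspondence would break.

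With the correspondence in place, I would prove the inductive invariant: for any tuple $(\rho_1, \dots, \rho_k)$, its canonical padding $\Pi'$, and every $t \in \R_{\geq 0}$, $(T, t) \models_\Pi \psi$ holds iff $(\mathit{stutter}(T), t) \models_{\Pi'} \psi'$, where $\psi'$ is the rewritten body. The crucial point for the base cases is that $\top$ holds at the same positions in both settings (the common timestamp set $\Theta$ equals the union of the original timestamps), and that $p_\pi$ agrees directly since silent events carry no proposition from $\AP$ (guaranteed by the second conjunct of $\psi''$); the only genuine discrepancy is that $\top_\pi$ and the negated atoms $\neg p_\pi$ become spuriously true at positions where $\pi$ is silent, which is exactly what the $\neg p_\pi^\epsilon$ guards introduced by the rewriting cancel out. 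For the temporal cases, I would observe that under well-formedness the guard $\bigvee_{\pi} \neg p^\epsilon_\pi$ is true at every $\top$-position, so the rewritten operator $(\bigvee_\pi \neg p_\pi^\epsilon \Rightarrow \psi_1') \until_I (\bigvee_\pi \neg p_\pi^\epsilon \wedge \psi_2')$ collapses to $\psi_1' \until_I \psi_2'$ over the $\top$-positions; since these positions coincide with those quantified by the asynchronous $\until_I$, the inductive hypotheses for $\psi_1, \psi_2$ close the case, and $\since_I$ is symmetric.

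Finally I would lift the invariant through the quantifier block. For $\mathcal{Q}$ existential, I pick the canonical paddings of the witnessing tuple; these satisfy the two well-formedness conjuncts of $\psi''$ by construction, so $\psi''$ reduces to $\psi'$ and the invariant at $t = 0$ gives the equivalence. For $\mathcal{Q}$ universal, I range over all synchronised tuples of $\mathit{stutter}(T)$: those that are canonical paddings satisfy the well-formedness antecedent and are handled by the invariant, while any synchronised tuple with an all-silent common timestamp falsifies the antecedent and makes the implication $\psi''$ vacuously true, so nothing is lost; the bijection then matches this with the asynchronous universal quantification over $T^k$. I expect the main obstacle to be the first paragraph's correspondence rather than the induction itself: one must argue carefully that the well-formedness conjunct pins the synchronised tuples down to the canonical paddings despite the sequential, prefix-by-prefix nature of synchronisation in $\models^\mathit{sync}$ (and despite the freedom, in principle, to pad with silent events at fresh timestamps), and that this pinning interacts correctly with the vacuous-implication treatment of the universal case.
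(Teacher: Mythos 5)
Your proof is correct and follows essentially the same route as the paper: the paper dismisses this lemma with the single remark that it ``follows from a simple structural induction'' (after observing that $\psi''$ enforces well-formedness and that its satisfaction is insensitive to silent events), and your canonical-padding bijection between tuples in $T^k$ and well-formed synchronised tuples, followed by induction on the body and lifting through the quantifier block, is precisely the fleshed-out form of that argument. In particular, your identification of the $\top_\pi$ / $\neg p_\pi$ discrepancy at silent positions (cancelled by the $\neg p^\epsilon_\pi$ guards) and of well-formedness as the property that pins synchronised tuples down to canonical paddings captures exactly what the paper's construction of $\mathit{stutter}(\phi)$ is designed to achieve.
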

The following two lemmas then follow from Lemma~\ref{lem:stuttering}
and the fact that for alternation-free \hmtl{} formulae, satisfiability
in the synchronous semantics can be reduced (in the same way as \hltl{})
to \mtl{} satisfiability.

\begin{lemma}\label{lem:existsdecidable}
The satisfiability problem for $\exists^\ast$-\hmtl{} is decidable.
\end{lemma}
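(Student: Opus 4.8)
The plan is to reduce the satisfiability of a closed $\exists^\ast$-\hmtl{} formula to ordinary \mtl{} satisfiability, which is decidable by the result cited after the \mtl{} semantics. The reduction has two stages: passing from the asynchronous to the synchronous semantics via Lemma~\ref{lem:stuttering}, and then ``zipping'' a synchronised tuple of witness traces into a single timed word over a product alphabet, exactly as one does for \hltl{}.

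First I would fix $\phi = \exists \pi_1 \dots \exists \pi_n \, \psi$ with $\psi$ quantifier-free, and show that there is a non-empty $T$ with $T \models \phi$ iff some non-empty trace set satisfies $\mathit{stutter}(\phi)$ in the synchronous semantics. The forward direction is immediate from Lemma~\ref{lem:stuttering}, since $\mathit{stutter}(T)$ is a non-empty synchronous witness whenever $T$ is an asynchronous one. For the converse I would take the $n$ witness traces used to satisfy the existential block in a synchronous model of $\mathit{stutter}(\phi)$, delete their $\{p^\epsilon\}$-events to obtain traces $\rho_1, \dots, \rho_n$, and argue via Lemma~\ref{lem:stuttering} that $T = \{\rho_1, \dots, \rho_n\}$ is an asynchronous model of $\phi$. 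Thus asynchronous $\exists^\ast$-satisfiability coincides with synchronous satisfiability of a (still $\exists^\ast$) stuttered formula.

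Next I would reduce synchronous $\exists^\ast$-satisfiability to \mtl{} satisfiability. Because the synchronous semantics forces the quantified traces to share all their timestamps, a synchronised tuple $(\rho_1, \dots, \rho_n)$ can be encoded as a single timed word $\rho$ over $2^{\AP'}$ with $\AP' = \AP \times \{1, \dots, n\}$, where the event at each common timestamp records, for every $i$ and $p$, whether $p$ holds in $\rho_i$ there. Treating each $p_{\pi_i}$ as the atomic proposition $(p,i)$ and replacing every $\top_{\pi_i}$ by $\top$ (all traces being present at every timestamp of $\rho$), the quantifier-free part $\psi$ becomes an ordinary \mtl{} formula $\widehat{\psi}$ over $\AP'$; a routine structural induction on $\psi$ shows $(\{\rho_1,\dots,\rho_n\},0) \models_\Pi \psi$ iff $\rho \models \widehat{\psi}$, where $\Pi$ maps $\pi_i$ to $\rho_i$. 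Conversely, any model of $\widehat{\psi}$ projects onto $n$ synchronised traces witnessing $\phi$. Hence $\mathit{stutter}(\phi)$ is synchronously satisfiable iff $\widehat{\psi}$ is \mtl{}-satisfiable, and decidability follows.

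I expect the main obstacle to be the bookkeeping in the two-way correspondence between synchronised tuples and product timed words, and in particular checking that the asynchronous-to-synchronous passage is sound at the level of \emph{satisfiability} rather than only for a fixed $T$: one must verify that de-stuttering the finitely many existential witnesses yields a genuine asynchronous model, which is precisely where the well-formedness enforced by $\mathit{stutter}(\phi)$ and the freedom to choose $T$ for an $\exists^\ast$ formula are used.
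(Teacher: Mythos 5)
Your proposal is correct and takes essentially the same route as the paper: its proof of this lemma is exactly Lemma~\ref{lem:stuttering} (passing from the asynchronous to the synchronous semantics via stuttering) combined with the observation that alternation-free satisfiability in the synchronous semantics reduces, as for \hltl{}, to \mtl{} satisfiability, which is decidable. Your two stages, including the product-alphabet ``zipping'' and the de-stuttering of the existential witnesses, simply spell out the details that the paper leaves implicit.
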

\begin{lemma}
The satisfiability problem for $\forall^\ast$-\hmtl{} is decidable.
\end{lemma}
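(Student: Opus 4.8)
The plan is to reduce the satisfiability problem for a closed $\forall^\ast$-\hmtl{} formula $\phi = \forall \pi_1 \, \dots \, \forall \pi_n \, \psi$ (with $\psi$ quantifier-free) to plain \mtl{} satisfiability, which is decidable via the translation of \mtl{} into \ta{s} ($\mathrm{EXPSPACE}$-complete). The key observation is a \emph{singleton-model property}: if $\phi$ has any non-empty model at all, then it already has a model consisting of a single trace. Once this is established, the $n$ universal quantifiers collapse (every $\pi_i$ is forced onto the one available trace) and the remaining obligation is an ordinary \mtl{} satisfiability check.

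First I would prove the singleton-model property. Suppose $T \models \phi$ for some non-empty trace set $T$ and pick any $\rho \in T$; I claim $\{\rho\} \models \phi$. The only trace assignment of $\pi_1, \dots, \pi_n$ into $\{\rho\}$ is $\Pi^\ast = \Pi_\emptyset[\pi_1 \mapsto \rho]\dots[\pi_n \mapsto \rho]$, so by the semantics of $\forall$ it suffices to show $(\{\rho\}, 0) \models_{\Pi^\ast} \psi$. Two facts do the work. (i) Since $\rho \in T$, the assignment $\Pi^\ast$ is also legal over $T$, and $T \models \phi$ then guarantees $(T, 0) \models_{\Pi^\ast} \psi$. (ii) For quantifier-free $\psi$ the satisfaction relation $(\,\cdot\,, t) \models_\Pi \psi$ depends only on $\mathit{range}(\Pi)$ and not on the ambient trace set: the clauses for $\top$, $\top_\pi$ and $p_\pi$ refer only to the assigned traces, and the (asynchronous) temporal operators range over the union of timestamps of the traces in $\mathit{range}(\Pi)$, never over $T$ itself. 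Hence $(T, 0) \models_{\Pi^\ast} \psi$ and $(\{\rho\}, 0) \models_{\Pi^\ast} \psi$ are equivalent, giving $\{\rho\} \models \phi$. The converse direction is immediate, since a singleton model is itself a non-empty model.

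With the singleton-model property in hand, satisfiability of $\phi$ reduces to the question: is there a trace $\rho$ with $(\{\rho\}, 0) \models_{\Pi^\ast} \psi$? Under $\Pi^\ast$ all trace-indexed atoms coincide on $\rho$, so I would define the \mtl{} formula $\psi'$ over $\AP$ obtained from $\psi$ by erasing trace subscripts (each $\top_{\pi_i}$ and the bare $\top$ become $\top$, and each $p_{\pi_i}$ becomes $p$); a routine structural induction gives $(\{\rho\}, t) \models_{\Pi^\ast} \psi \iff (\rho, t) \models \psi'$, the point being that the combined timestamp set seen by the asynchronous operators degenerates to the timestamps of the single trace $\rho$. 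Thus $\phi$ is satisfiable iff $\psi'$ is \mtl{}-satisfiable, and decidability follows. The argument is insensitive to the choice of semantics: in a singleton the unique trace trivially synchronises with itself, so the synchronisation constraints introduced by $\mathit{sync}(\cdot)$ are vacuously met and the same reduction yields decidability for $\models^\mathit{sync}$ as well (alternatively one may route through Lemma~\ref{lem:stuttering}, exactly as in Lemma~\ref{lem:existsdecidable}). The only delicate point, and the one I expect to require the most care, is fact~(ii) above: one must verify that no clause of the quantifier-free semantics secretly refers to the ambient set $T$ (in particular the $\top$/$\top_\pi$ distinction), so that shrinking $T$ to $\{\rho\}$ genuinely preserves satisfaction of $\psi$.
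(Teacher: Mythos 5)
Your proof is correct. The singleton-model property holds exactly as you argue: the only clauses of the satisfaction relation that mention the ambient set $T$ are the two quantifier clauses, so quantifier-free satisfaction depends only on $\mathit{range}(\Pi)$, and a nonempty model can be shrunk to any of its singletons; the collapse of subscripted atoms then turns the question into plain \mtl{} satisfiability, which is decidable. This is, at its core, the same key insight the paper invokes --- the \hltl{}-style reduction in which all universally quantified trace variables are identified with a single trace --- but your route to it is genuinely more direct. The paper's (one-line) proof first passes through Lemma~\ref{lem:stuttering}, converting the asynchronous satisfiability question into a synchronous one over stuttered traces and a stuttered formula, and only then applies the collapse in the synchronous semantics; this buys uniformity, since the very same stuttering infrastructure is what makes the $\exists^\ast$ case (Lemma~\ref{lem:existsdecidable}) work, where a collapse onto one trace is impossible and the interleaving of several distinct witness traces must genuinely be handled. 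Your argument exploits a $\forall^\ast$-specific shortcut: once every variable is forced onto one trace, the asynchronous/synchronous distinction evaporates (a trace trivially synchronises with itself), so the stuttering detour and the well-formedness bookkeeping it entails are simply unnecessary. The result is a more elementary, self-contained proof of this particular lemma, at the cost of not sharing machinery with its $\exists^\ast$ sibling; you also correctly identified the one point that needs explicit verification, namely that no clause of the quantifier-free semantics secretly consults $T$.
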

Lemma~\ref{lem:stuttering},  however,  does not extend to larger
fragments of \hmtl{}. For example, consider $T = \{ (\{p\}, 1)(\{r\}, 3), (\{q\}, 2) \}$
and $\phi = \exists \pi_a \, \forall \pi_b \, (\weventually p_a \wedge \neg \weventually q_b)$.
Now it is obvious that $T \centernot \models \phi$, but since $(\{p\}, 1)(\{r\}, 3) \in \mathit{stutter}(T)$, 
we have $\mathit{stutter}(T) \models^\mathit{sync} \mathit{stutter}(\phi)$ (provided that 
the definition of $\mathit{stutter}(\cdot)$ is extended to general \hmtl{} formulae, as in Lemma~\ref{lem:syncinasync}).
Still, it is not hard to see that the crucial observation used in 
$\exists^\ast \forall^\ast$-\hltl{} satisfiability 
(if $\exists \pi_0 \, \dots \exists \pi_k \, \forall \pi_0' \, \dots \forall \pi_\ell' \, \psi$ is satisfiable, then it is also satisfiable 
by the trace set $\{ \pi_0, \dots \pi_k \}$)
extends to \hmtl{} in the asynchronous semantics; the following lemma
then follows from  Lemma~\ref{lem:existsdecidable}.
\begin{lemma}
The satisfiability problem for $\exists^\ast \forall^\ast$-\hmtl{} is decidable.
\end{lemma}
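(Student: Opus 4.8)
The plan is to reduce satisfiability of an $\exists^\ast\forall^\ast$-\hmtl{} formula to satisfiability of an $\exists^\ast$-\hmtl{} formula, which is decidable by Lemma~\ref{lem:existsdecidable}. The key is the ``small model'' observation stated in the excerpt: if $\phi = \exists \pi_0 \dots \exists \pi_k \, \forall \pi_0' \dots \forall \pi_\ell' \, \psi$ is satisfiable by some trace set $T$, then it is already satisfied by the \emph{finite} subset $T' = \{\Pi(\pi_0), \dots, \Pi(\pi_k)\}$ consisting of just the witnesses chosen for the existential quantifiers. First I would verify this claim carefully in the asynchronous semantics. Suppose $T \models \phi$; then there exist traces $\rho_0, \dots, \rho_k \in T$ such that, setting $\Pi$ to map each $\pi_i \mapsto \rho_i$, we have $(T, 0) \models_\Pi \forall \pi_0' \dots \forall \pi_\ell' \, \psi$, i.e.\ $\psi$ holds for \emph{every} choice of the universally quantified traces drawn from $T$. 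In particular it holds for every choice drawn from the smaller set $T' = \{\rho_0, \dots, \rho_k\} \subseteq T$. Since $T'$ still contains all the $\rho_i$, the same existential witnesses work, and we conclude $T' \models \phi$. The converse is trivial since $T'$ is itself a witnessing trace set.

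Having established that satisfiability is equivalent to satisfiability by a trace set of size at most $k+1$, the second step is to encode ``a trace set of size at most $k+1$'' directly into an $\exists^\ast$ formula. The idea is that once we fix the $k+1$ existential witnesses, each universally quantified variable $\pi_j'$ may only range over these same $k+1$ traces; so $\forall \pi_j' \, \chi$ can be replaced by the finite conjunction $\bigwedge_{i=0}^{k} \chi[\pi_j' := \pi_i]$, where $\chi[\pi_j' := \pi_i]$ denotes the formula obtained by substituting the trace variable $\pi_i$ for every occurrence of $\pi_j'$ (i.e.\ replacing each atom $p_{\pi_j'}$ by $p_{\pi_i}$, and likewise $\top_{\pi_j'}$ by $\top_{\pi_i}$). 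Iterating this substitution over all $\ell+1$ universal quantifiers yields a purely existential formula $\phi^\exists = \exists \pi_0 \dots \exists \pi_k \, \psi^\exists$ over the same atomic propositions, where $\psi^\exists$ is a quantifier-free \mtl{}-style formula (a conjunction of $(k+1)^{\ell+1}$ substituted copies of $\psi$). By the small-model claim and the semantics of the quantifiers, $\phi$ is satisfiable iff $\phi^\exists$ is satisfiable.

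The final step is to invoke Lemma~\ref{lem:existsdecidable}: since $\phi^\exists$ is a closed $\exists^\ast$-\hmtl{} formula, its satisfiability is decidable, and hence so is that of the original $\exists^\ast\forall^\ast$ formula. I expect the main obstacle to be the careful justification of the universal-to-conjunction substitution at the semantic level, specifically confirming that the asynchronous satisfaction relation $\models_\Pi$ interacts correctly with substituting a trace \emph{variable} for another under a fixed assignment---one has to check that evaluating $\chi[\pi_j' := \pi_i]$ under $\Pi$ agrees with evaluating $\chi$ under $\Pi[\pi_j' \mapsto \Pi(\pi_i)]$. This is a routine structural induction on $\chi$ (the temporal modalities quantify over timestamps $t'$ but not over traces, so they pose no difficulty), but it is the one place where the argument could go wrong if the semantics had made the universal quantifier range over something other than the explicit trace set $T$. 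Everything else is bookkeeping, and the whole reduction is elementary once the small-model property is in hand.
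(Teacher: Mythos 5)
Your proposal is correct and takes essentially the same route as the paper: the paper's own justification is exactly the small-model observation (imported from $\exists^\ast\forall^\ast$-\hltl{} satisfiability) that the existential witnesses alone form a satisfying trace set, after which the universal quantifiers can be eliminated in favour of finite conjunctions over those witnesses, reducing everything to Lemma~\ref{lem:existsdecidable}. Your write-up just makes explicit the substitution bookkeeping that the paper leaves implicit.
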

Finally, note that the undecidability of $\forall \exists$-\hltl{}
carries over to \hmtl{}: in the synchronous semantics, the reduction in~\cite{FinkbeinerH16}
applies directly with some trivial modifications (as we work with finite traces);
undecidability then holds for the case of asynchronous semantics as well, by Lemma~\ref{lem:syncinasync}.
\begin{lemma}
The satisfiability problem for $\forall \exists$-\hmtl{} is undecidable.
\end{lemma}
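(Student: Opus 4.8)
The plan is to first establish undecidability in the synchronous semantics by importing the known undecidability of $\forall\exists$-\hltl{} satisfiability~\cite{FinkbeinerH16}, and then to push undecidability across to the asynchronous semantics by the effective translation of Lemma~\ref{lem:syncinasync}. The guiding observation is that \hltl{} is nothing but the syntactic fragment of \hmtl{} in which every constraining interval is $[0,\infty)$, so an \hltl{} formula \emph{is already} an \hmtl{} formula; what remains is to check that the two logics have matching satisfaction relations on the traces of interest.

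For the synchronous case, I would argue that when all quantified traces are forced to synchronise (as the synchronous semantics demands), the timestamps become inert: since every event position carries the same timestamp across traces and all intervals are $[0,\infty)$, each unconstrained $\until$ (resp.\ $\since$) over the common timeline behaves exactly like the corresponding untimed \ltl{} operator over the shared sequence of positions. Hence, for a formula $\phi$ built only from unconstrained operators, $T \models^\mathit{sync}\phi$ coincides with \hltl{}-satisfaction over the untimed projections of the traces in $T$. Consequently the reduction of~\cite{FinkbeinerH16} --- which, from an instance of an undecidable problem, produces a $\forall\exists$-\hltl{} formula that is satisfiable iff the instance is positive --- can be read directly as a reduction to $\forall\exists$-\hmtl{} satisfiability in the synchronous semantics. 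The only genuine adaptation needed is that we work with finite timed words rather than infinite words: one must re-examine the original encoding and ensure that the computation/tiling/correspondence it encodes is captured by finite traces (e.g.\ by marking a terminal event and relativising the temporal operators accordingly), which is routine provided the underlying undecidable problem is taken to be finitary.

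Given undecidability in the synchronous semantics, I would transfer it to the asynchronous semantics using Lemma~\ref{lem:syncinasync}: for any closed $\phi$, $T \models^\mathit{sync}\phi$ iff $T \models \mathit{sync}(\phi,\emptyset)$, so $\phi$ is synchronously satisfiable iff $\mathit{sync}(\phi,\emptyset)$ is asynchronously satisfiable. Since $\mathit{sync}(\cdot,\emptyset)$ is computable and leaves the quantifier prefix untouched --- it only inserts $\wglobally$-conjuncts and implications into the quantifier-free matrix --- the map $\phi \mapsto \mathit{sync}(\phi,\emptyset)$ is a many-one reduction that stays within the $\forall\exists$ fragment. Undecidability of asynchronous $\forall\exists$-\hmtl{} satisfiability then follows immediately.

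I expect the main obstacle to be the finite-trace adaptation of the \hltl{} reduction rather than the logical machinery: the transfer between semantics is purely syntactic and the synchronisation argument is straightforward, but verifying that the infinite-word encoding of~\cite{FinkbeinerH16} survives the passage to finite timed words --- and that no use is implicitly made of singular intervals or of timing to achieve synchronisation, which the non-singularity restriction forbids --- is where care is required. Choosing a finitary source problem and confining the construction to unconstrained intervals sidesteps both concerns.
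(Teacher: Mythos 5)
Your proposal is correct and follows essentially the same route as the paper's own proof: the paper likewise observes that the $\forall\exists$-\hltl{} undecidability reduction of~\cite{FinkbeinerH16} applies directly in the synchronous semantics (modulo trivial modifications for finite traces) and then transfers undecidability to the asynchronous semantics via Lemma~\ref{lem:syncinasync}. Your additional observations---that timestamps are inert under synchronisation for unconstrained operators, and that $\mathit{sync}(\cdot,\emptyset)$ is computable and preserves the $\forall\exists$ quantifier pattern---merely make explicit details the paper leaves implicit.
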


\begin{theorem}
The satisfiability problem for \hmtl{} is decidable if
the formula does not contain $\forall \exists$.
\end{theorem}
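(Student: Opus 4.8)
The plan is to reduce the statement to the lemma establishing decidability of $\exists^\ast\forall^\ast$-\hmtl{} satisfiability, via a purely syntactic observation about quantifier prefixes. By the grammar of \hmtl{}, every closed formula is already in prenex form, so it can be written $Q_1 \pi_1 \dots Q_n \pi_n \, \psi$ with each $Q_i \in \{\exists, \forall\}$ and $\psi$ quantifier-free. Under this reading, the hypothesis that $\phi$ does not contain $\forall\exists$ is precisely the condition that the word $Q_1 \dots Q_n$ over $\{\exists,\forall\}$ has no position $i$ with $Q_i = \forall$ and $Q_{i+1} = \exists$.

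First I would observe that any such word must place all of its existential quantifiers before all of its universal quantifiers. Indeed, if some $\exists$ occurred at a position later than some $\forall$, then at the first place where the prefix switches from $\forall$ to $\exists$ there would be an adjacent pair $\forall\exists$, contradicting the hypothesis. Hence $Q_1 \dots Q_n = \exists^k \forall^{n-k}$ for some $0 \le k \le n$, i.e.~$\phi$ lies in the $\exists^\ast\forall^\ast$-fragment, and invoking the $\exists^\ast\forall^\ast$-\hmtl{} lemma above yields decidability.

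There is no genuine obstacle at the level of the theorem itself: all the substance has already been discharged in the lemmas, ultimately bottoming out in Lemma~\ref{lem:existsdecidable} through the $\exists^\ast\forall^\ast$ reduction (which relies on the fact, inherited from $\exists^\ast\forall^\ast$-\hltl{}, that a satisfiable formula of this shape is already satisfiable over the finite set of witnesses for its existential block). The only point worth stressing is that this classification is tight, since the complementary undecidability of $\forall\exists$-\hmtl{} shows the restriction cannot be relaxed: a formula falls under this theorem exactly when its prefix avoids the factor $\forall\exists$, equivalently when it has the shape $\exists^\ast\forall^\ast$.
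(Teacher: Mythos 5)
Your proposal is correct and matches the paper's (implicit) argument: the theorem is stated there as an immediate consequence of the preceding lemmas, exactly as you derive it, with the quantifier prefix avoiding the factor $\forall\exists$ forcing the $\exists^\ast\forall^\ast$ shape and the corresponding lemma (which rests on the finite-witness observation you cite) giving decidability. Your explicit justification that factor-avoidance of $\forall\exists$ coincides with the prefix being $\exists^k\forall^{n-k}$ is a small but welcome clarification of what the paper leaves unstated.
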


\section{Model checking}

\paragraph{The alternation-free case.}
Without loss of generality, we consider only the case of $\exists^\ast$-\hmtl{}
in the asynchronous semantics.
By Lemma~\ref{lem:syncinasync}, checking $\sem{\A} \models \phi$ (for a \ta{} $\A$ over $\Sigma_\AP$
and a closed $\exists^\ast$-\hmtl{} formula $\phi$ over $\AP$)
is equivalent to checking $\mathit{stutter}(\sem{\A}) \models^\mathit{sync} \mathit{stutter}(\phi)$.
To this end, we define $\mathit{stutter}(\A)$ as the \ta{} over $\Sigma_{\AP_\epsilon}$
obtained from $\A$ by adding a self-loop
labelled with $\{ p^\epsilon \}$ to each location; it should be clear that $\sem{\mathit{stutter}(\A)} = \mathit{stutter}(\sem{\A})$. In this way, the problem reduces to model checking $\exists^\ast$-\hmtl{}
in the synchronous semantics which, as the model-checking problem for $\exists^\ast$-\hltl{},
can be reduced to \mtl{} model checking.
\begin{theorem}
Model checking alternation-free \hmtl{} is decidable.
\end{theorem}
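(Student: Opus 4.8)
The plan is to reduce every case of alternation-free \hmtl{} model checking---an existential or a universal quantifier prefix, under the synchronous or the asynchronous semantics---to a single decidable problem, namely emptiness of \ta{}s. First I would argue that it suffices to treat the existential asynchronous case $\sem{\A} \models \exists \pi_1 \dots \exists \pi_k \, \psi$. The synchronous variant reduces to the asynchronous one by Lemma~\ref{lem:syncinasync}, since $\mathit{sync}(\phi, \emptyset)$ leaves the (purely existential or purely universal) quantifier prefix unchanged and hence stays alternation-free. The universal case is handled by duality: $\sem{\A} \models \forall \pi_1 \dots \forall \pi_k \, \psi$ fails iff $\sem{\A} \models \exists \pi_1 \dots \exists \pi_k \, \mathit{nnf}(\neg\psi)$ holds (the quantifiers range over the same trace set, so the negation only affects the matrix), and therefore any decision procedure for the existential case yields one for the universal case by complementing the answer.

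For the existential asynchronous case I would follow the reduction indicated above. By Lemma~\ref{lem:stuttering}, $\sem{\A} \models \phi$ iff $\mathit{stutter}(\sem{\A}) \models^\mathit{sync} \mathit{stutter}(\phi)$, and $\mathit{stutter}(\A)$---obtained by attaching a $\{ p^\epsilon \}$-self-loop to every location---is a \ta{} with $\sem{\mathit{stutter}(\A)} = \mathit{stutter}(\sem{\A})$. This moves the problem into the synchronous semantics, in which all $k$ quantified traces share a common set of timestamps; the inserted silent events are exactly what allows genuinely interleaved traces to be aligned on a common timeline.

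The core step is then to decide $\sem{\mathit{stutter}(\A)} \models^\mathit{sync} \exists \pi_1 \dots \exists \pi_k \, \psi''$ for the quantifier-free \mtl{} matrix $\psi''$ over the tagged propositions $\AP_\epsilon \times \{\pi_1, \dots, \pi_k\}$. I would form the $k$-fold synchronous product of $\mathit{stutter}(\A)$ with itself: a single \ta{} over the alphabet $(\Sigma_{\AP_\epsilon})^k$ whose component clock-sets all advance together and in which a transition fires only when every component takes a timestamp-synchronised step, so that its accepting runs are precisely the $k$-tuples of synchronised traces of $\mathit{stutter}(\A)$. In parallel I would translate $\psi''$ into an equivalent \ta{} $\B$ over the same product alphabet (every \mtl{} formula admits such a translation~\cite{Alur1992}). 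The existential formula then holds iff the intersection of the language of the product automaton with $\sem{\B}$ is non-empty, and \ta{} emptiness is decidable. Composing these reductions establishes decidability in all four sub-cases.

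I expect the main obstacle to be the synchronous product construction. One must verify that, over stuttered traces, forcing all components onto a shared timeline faithfully captures the asynchronous interleaving semantics---this is precisely the content that makes Lemma~\ref{lem:stuttering} applicable---and that the product remains a bona fide \ta{}, of the right shape for the \mtl{}-to-\ta{} translation and the final emptiness test to apply to it. Since the reduction mirrors the $\exists^\ast$-\hltl{} case and only adds the bookkeeping of clocks through the product, I do not anticipate any essentially new automata-theoretic difficulty beyond carrying the timing information correctly.
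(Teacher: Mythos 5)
Your proposal is correct and follows essentially the same route as the paper: reduce everything to the existential asynchronous case (via duality and Lemma~\ref{lem:syncinasync}), apply Lemma~\ref{lem:stuttering} together with $\mathit{stutter}(\A)$ to move to the synchronous semantics, and then solve the synchronous $\exists^\ast$ case by self-composition and the \mtl{}-to-\ta{} translation followed by an emptiness test. You merely spell out the details (the $k$-fold synchronous product, the final intersection) that the paper compresses into ``reduces to \mtl{} model checking as in $\exists^\ast$-\hltl{}.''
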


\paragraph{The general case.}
Recall that the model-checking problem for \hltl{} is decidable
even when the specification involves arbitrary nesting of quantifiers.
This is unfortunately not the case for \hmtl{}: allowing only one quantifier
alternation already leads to undecidability.
To see this, recall that any \ta{} can be
written as a formula $\exists X \, \psi$ where $X$
is a set of (new) atomic propositions and $\psi$ is an \mtl{} formula~\cite{HenRas98, Raskin1999}. The undecidable \ta{} universality problem---given a \ta{} $\A$ over $\Sigma$, deciding whether $\sem{\A} = T\Sigma^\ast$---can thus be reduced
to model checking \hmtl{}:
one simply checks whether there exists an $X$-labelling
for every timed word over $\Sigma$ so that $\psi$ is satisfied.
Here we show that model checking \hmtl{} is, essentially, a harder problem;
in the case of asynchronous semantics, it remains undecidable 
even when \emph{both the model and the specification are deterministic and
only one of them uses a single clock}
(i.e.~the other is untimed); by contrast,
\ta{} universality (over finite timed words) is decidable when the \ta{} 
is deterministic~\cite{AluDil94} \emph{or} uses only one clock~\cite{OuaknineW04}.

We adapt the undecidability proof of the
\emph{reactive synthesis} problem for \mtl{} in~\cite{BrihayeEGHMS16}, which itself is by reduction
from the halting problem for \emph{deterministic channel machines} (DCMs), known to be undecidable~\cite{Brand1983}.
Note that the reactive synthesis problem is decidable
when the specification is deterministic~\cite{DSouzaMadhusudan02}
(as a matter of fact, the quantifier-free part $\psi$ in the encoding mentioned above is already in
\ltlhistory{}); 
in this sense, quantification over traces is more powerful than
quantification over strategies (for all possible strategies
of the environment, there is a strategy of the controller that is winning).
For our purpose, we introduce the $\history_I$ operator, in which we allow $I$ to be singular
(note that this is merely syntactic sugar and does not increase the expressiveness of \mtl{}~\cite{HenRas98, Raskin1999}):
\begin{itemize}
\item $(\rho, t)\models \history_I \phi$ iff there exists
  $t'$, $0 \leq t' < t$ such that $t-t' \in I$, $(\rho, t') \models \top$, $(\rho, t') \models \phi$, and $(\rho, t'') \centernot \models \phi$ for all $t''$ such that
$t'' \in (t', t)$ and $(\rho, t'') \models \top$.
\end{itemize}
Let \ltlhistory{} be the fragment of \mtl{} where
all timed subformulae must be of the form $\history_I \phi$,
and all $\phi$'s in such subformulae must be `pure past' formulae;
these requirements ensure that \ltlhistory{}, in which we will write
the quantifier-free part of the specification, translates into
deterministic \ta{s}~\cite{DoyenGRR09}.
To ease the understanding, we will first do the proof for the case of asynchronous semantics
and then adapt it to the case of synchronous semantics.

\begin{theorem}\label{thm:efundecasync}
Model checking $\exists^\ast \forall^\ast$-\hmtl{} and $\forall^\ast \exists^\ast$-\hmtl{} are undecidable in the asynchronous semantics. 
\end{theorem}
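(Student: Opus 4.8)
The plan is to reduce the halting problem for deterministic channel machines (DCMs), which is undecidable~\cite{Brand1983}, to the model-checking problem, reusing the timed encoding of DCM computations developed for the \mtl{} reactive-synthesis undecidability proof in~\cite{BrihayeEGHMS16}. I would first establish undecidability of $\exists^\ast \forall^\ast$-\hmtl{} model checking and then obtain the $\forall^\ast \exists^\ast$ case for free: checking $\sem{\A} \models \forall \pi_a \, \exists \pi_b \, \psi$ is the complement of checking $\sem{\A} \models \exists \pi_a \, \forall \pi_b \, \neg\psi$, and since \mtl{} is closed under negation and decidability is closed under complementation, undecidability of one order immediately yields undecidability of the other.

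Recall the encoding: a run $C_0 C_1 \cdots C_m$ of the DCM is laid out in consecutive unit-length time intervals, with configuration $C_i$ occupying $[i,i+1)$; a control event records the state and the channel contents are placed at increasing fractional timestamps inside the interval. The FIFO semantics is captured by a punctual ``copy'' invariant: every channel symbol sitting at time $t$ is reproduced at time $t+1$, except that a write appends a fresh symbol at the tail (with no predecessor one unit earlier) and a read deletes the head symbol (with no successor one unit later). All the purely local conditions --- correct control transitions, the read guard matching the current head, and the head/tail bookkeeping --- are expressible as a pure-past, hence \ltlhistory{}, formula, and are therefore enforceable both by the generating \ta{} and by a deterministic quantifier-free matrix, exactly as in~\cite{BrihayeEGHMS16}.

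The heart of the reduction is to replace the universal environment of the synthesis game by universal trace quantification. I would let $\A$ generate (i) locally well-formed candidate run-encodings and (ii) a family of ``challenge'' traces, where each challenge carries the same symbol at two events exactly one time unit apart, at times $t$ and $t+1$; crucially, this punctual offset is produced by a single clock inside $\A$ (symmetrically, one could instead push the clock into the matrix via $\history_{=1}$ and keep $\A$ untimed). The existential trace $\pi_a$ then fixes a candidate halting run, while $\forall \pi_b$ ranges over all challenges. Exploiting the asynchronous semantics, the two traces share a common real-time axis, so the matrix $\psi$ can compare them at \emph{equal} times: it asserts that whenever $\pi_b$'s source event at $t$ agrees with $\pi_a$ (so the challenge is ``relevant'' to $\pi_a$) and the symbol at $t$ is one that ought to be copied, then $\pi_a$ must carry the matching symbol at $t+1$ --- which, since $\pi_b$ already supplies that symbol at $t+1$ by construction, is again just an equal-time comparison between $\pi_a$ and $\pi_b$. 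Ranging over every position $t$ and symbol, this forces $\pi_a$ to copy faithfully; a symmetric family of challenges forces the backward direction (no spurious insertions). Irrelevant or malformed $\pi_b$ falsify the hypothesis of $\psi$ and are harmless. Hence $\sem{\A} \models \exists \pi_a \, \forall \pi_b \, \psi$ holds iff some trace encodes a genuine, globally faithful run of the DCM from the initial configuration to a halting state, i.e.~iff the DCM halts.

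The main obstacle is precisely the punctual copy constraint. Since single-trace \mtl{} model checking is decidable~\cite{AluFed96}, global faithfulness of the channel contents cannot be pinned down by any single quantifier-free formula; the work must be done by the trace quantifiers. The delicate part is therefore to show that universal quantification over the challenge family enforces \emph{exactly} the forward and backward copy obligations --- no more, no less --- while keeping irrelevant traces inert, and to verify that the asynchronous overlay aligns each challenge's $t$ and $t+1$ with the intended events of $\pi_a$. Getting this alignment and the harmlessness of junk traces right, all while respecting the determinism of the \ltlhistory{} matrix and the single-clock budget, is where the real care is needed; the DCM-to-timed-word bookkeeping of reads and writes is then routine by appeal to~\cite{BrihayeEGHMS16}.
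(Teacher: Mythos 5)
Your proposal is correct and follows essentially the same route as the paper's proof: an $\exists\forall$ reduction from DCM halting in which the model generates both candidate computation traces and punctually-offset ``challenge'' traces (the punctual offset supplied either by a single clock in the model or by $\history_{=1}$ in the specification), universal quantification over challenges enforces the punctual matching constraints via equal-time comparisons in the asynchronous semantics, and the $\forall^\ast\exists^\ast$ case follows by complementation of the $\exists^\ast\forall^\ast$ case. The only cosmetic difference is that you encode explicit configurations copied across unit intervals (in the style of~\cite{OuaWor07}), whereas the paper encodes the operation sequence directly, matching each $m!$ at time $t$ with an $m?$ at time $t+1$ (following~\cite{BrihayeEGHMS16}); both encodings interact with the challenge mechanism in the same way.
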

\begin{proof}
A DCM $\mathcal{S} = \langle S, s_0, s_\mathit{halt}, M, \Delta \rangle$
can be seen as a finite automaton equipped with an unbounded fifo channel:
$S$ is a finite set of locations, $s_0$ is the initial location, $s_\mathit{halt}$
is the halting location (such that $s_\mathit{halt} \neq s_0$), $M$ is a finite set of messages, and
$\Delta \subseteq S \times \{ m!, m? \mid m \in M \} \times S$
is the transition relation satisfying the following determinism hypothesis: 
(i) $(s, a, s') \in \Delta$ and $(s, a, s'') \in \Delta$ implies $s' = s''$;
(ii) if $(s, m!, s') \in \Delta$ then it is the only outgoing transition
from $s$. 
Without loss of generality, we further assume that 
there is no incoming transition to $s_0$, no outgoing transition from $s_\mathit{halt}$,
and $(s_0, a, s') \in \Delta$ implies that $a \in \{ m! \mid m \in M \}$ and $s' \neq s_\mathit{halt}$.
The semantics of $\mathcal{S}$ can be described with a graph $G(\mathcal{S})$ with vertices 
$\{ (s, x) \mid s \in S\text{, } x \in M^\ast \}$
and edges defined as follows: (i) $(s, x) \rightarrow (s', xm)$
if $(s, m!, s') \in \Delta$; (ii) $(s, mx) \rightarrow (s', x)$
if $(s, m?, s') \in \Delta$.
In other words, $m!$ `writes' a copy of $m$ to the channel
and $m?$ `reads' a copy of $m$ off the channel.
We say that $\mathcal{S}$ \emph{halts}
if there is a path in $G(\mathcal{S})$ from $(s_0, \epsilon)$
to $(s_\mathit{halt}, x)$ (a \emph{halting computation} of $\mathcal{S}$) for some $x \in M^\ast$.

The idea, as in many similar proofs (e.g.,~\cite{OuaWor07}), is to encode a halting computation of $\mathcal{S}$ as a
trace where each $m?$ is preceded
by a corresponding $m!$ exactly $1$ time unit earlier,
and each $m!$ is followed by an $m?$ exactly $1$ time unit later
if $s_\mathit{halt}$ has not been reached yet.
To this end, let the model $\A$ be an (untimed) finite automaton 
over $\Sigma = 2^\AP$ where $\AP = \{ m^!, m^? \mid m \in M \} \cup \{ p^\mathit{begin}, p^\mathit{end}, p^\mathit{read}, 
p^1, q^1 \}$
and whose set of locations is $S \cup \{s_1\}$,
where $s_1$ is a new non-accepting location.
The transitions of $\A$ follow $\mathcal{S}$:
for each $m \in M$, $s \xrightarrow{\{m^?\}} s'$ is a transition
of $\A$ iff $(s, m?, s') \in \Delta$, and similarly for $m!$---except for those going out of $s_0$
or going into $s_\mathit{halt}$, on which we further require
$p^\mathit{begin}$ or $p^\mathit{end}$ to hold, respectively.
Let $s_0$ be the initial location and $s_\mathit{halt}$ be the only
accepting location, and finally add transitions
$s_0 \xrightarrow{\{p^\mathit{read}\}} s_\mathit{halt}$
and $s_0 \xrightarrow{\{p^1\}} s_1 \xrightarrow{\{q^1\}} s_\mathit{halt}$.
It is clear that $\A$ is deterministic and it admits only three types of traces: 
\begin{enumerate}
\item From $s_0$ through some other locations of $\mathcal{S}$ and finally $s_\mathit{halt}$, i.e.~those respecting the transition
relation, but not necessarily the semantics, of $\mathcal{S}$.
\item From $s_0$ to $s_\mathit{halt}$ in a single transition (on which $p^\mathit{read}$ holds).
\item From $s_0$ to $s_1$ and then $s_\mathit{halt}$.
\end{enumerate}
It remains to write a specification $\phi$ such that $\sem{\A} \models \phi$ exactly when
$\A$ accepts a trace of type ($1$) that also respects the semantics of $\mathcal{S}$.
This is where the traces of types ($2$) and ($3$) come into play: for example, if a
trace of type ($1$) issues a read $m?$ 
without a corresponding write $m!$, then a trace of 
type ($3$) can be used to `pinpoint' the error. 
More precisely, let $\phi = \exists \pi_a \, \forall \pi_b \, (\psi_1 \wedge \psi_2 \wedge \psi_3 \wedge \psi_4)$ where
\begin{itemize}
\item $\psi_1 = \weventually p^\mathit{end}_{a}$ ensures that $\pi_a$ is of type ($1$); 
\item $\psi_2 = \weventually (p^\mathit{read}_b \wedge \psi_\mathit{R}) \Rightarrow \weventually (p^\mathit{read}_b \wedge \history_{\geq 1} p^\mathit{begin}_a)$, where
$\psi_\mathit{R} = \bigvee \{ m^?_a \mid m \in M \}$, is a simple sanity check
which ensures that in $\pi_a$, each $m^?$ must happen at time $\geq t + 1$ if $p^\mathit{begin}$ happens at $t$;
\item $\psi_3 = \bigwedge_{m \in M} \Big( \weventually (q^1_b \wedge m_a^?) \Rightarrow \big(\weventually (p^\mathit{begin}_a \wedge \weventually p^1_b) \wedge \weventually (q^1_b \wedge \history_{=1} p^1_b) \Rightarrow \weventually (p^1_b \wedge m_a^!) \big) \Big)$ ensures that each $m^?$, if it happens at $t$, is preceded by
a corresponding $m^!$ at $t - 1$ in $\pi_a$;
\item $\psi_4 = \bigwedge_{m \in M} \Big( \weventually (p^1_b \wedge m_a^!) \Rightarrow \weventually (p^\mathit{end}_a \wedge \history_{<1} p^1_b) \vee \big(\weventually (q^1_b \wedge \history_{=1} p^1_b) \Rightarrow \weventually (q^1_b \wedge m_a^?) \big) \Big)$ ensures that each $m^!$ at $t$ is followed by a corresponding $m^?$ at $t+1$ (unless $p^\mathit{end}$
happens first) in $\pi_a$.
\end{itemize}
Now observe that the only timed subformulae are $\history_{\geq 1} p^\mathit{begin}_a$, $\history_{=1} p^1_b$, 
and $\history_{<1} p^1_b$.
As $p^1$ and $p^\mathit{read}$ cannot happen in the same trace ($\pi_b$), 
it is not hard to see that the reduction remains correct if we replace these by 
$\history_{\geq 1} (p^\mathit{begin}_a \vee p^1_b)$, $\history_{=1} (p^\mathit{begin}_a \vee p^1_b)$, 
and $\history_{<1} (p^\mathit{begin}_a \vee p^1_b)$ (respectively) to obtain $\psi_2'$, $\psi_3'$, and $\psi_4'$.
It follows that $\psi_1 \wedge \psi_2' \wedge \psi_3' \wedge \psi_4'$
can be translated into a one-clock deterministic \ta{}.
Finally, it is possible to move all the timing constraint
into the model and use an untimed \ltl{} formula as the specification:
in the model, ensure that $p^1$ and $q^1$ are separated by exactly $1$ time unit,
and add $s_0 \xrightarrow{\{p^2\}} s_1 \xrightarrow{\{q^2\}} s_\mathit{halt}$ such that
$p^2$ and $q^2$ are separated by $<1$ time unit;
in the specification, use $p^2$, $q^2$ to rule out those $\pi_a$'s
with some $m^?$ at $<1$ time unit from $p^\mathit{begin}$.
\end{proof}

Now we consider the synchronous semantics. The corresponding result is weaker in this case, as we will see in the next section that in several subcases the problem becomes decidable. Still, the reduction above can be made to work if the model has one clock and an extra trace quantifier is allowed.
\begin{theorem}\label{thm:efundecsync}
Model checking $\exists^\ast \forall^\ast$-\hmtl{} and $\forall^\ast \exists^\ast$-\hmtl{} are undecidable in the synchronous semantics. 
\end{theorem}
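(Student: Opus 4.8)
The plan is to recycle the deterministic channel machine (DCM) reduction of Theorem~\ref{thm:efundecasync}, keeping the encoding of a halting computation as a trace in which every read $m^?$ at time $t$ is matched by a write $m^!$ exactly $1$ time unit earlier. Following the last paragraph of that proof, I would push all timing into the model, taking $\A$ to be an (otherwise untimed) \ta{} with a single clock whose only purpose is to certify the punctual ``$=1$'' and the sub-unit ``$<1$'' spacings between the auxiliary marker events $p^1/q^1$ and $p^2/q^2$; the quantifier-free part of the specification can then remain untimed, exactly as in the alternative encoding described at the end of Theorem~\ref{thm:efundecasync}.

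The obstacle that sets this case apart is the $\wglobally\bigwedge_\pi\top_\pi$ conjunct introduced by $\mathit{sync}(\cdot)$ (Lemma~\ref{lem:syncinasync}): in the synchronous semantics every quantified trace must carry an event at precisely the timestamps of $\pi_a$. Hence the short probe traces of types~(2) and~(3), which in the asynchronous proof lived on an \emph{independent} timeline and could place $p^1$ at a time $t-1$ where $\pi_a$ has no event at all, simply disappear --- a synchronising probe can only inspect time points that already occur in $\pi_a$. The first step is therefore to relocate the probes onto the common skeleton, padding them with silent events $p^\epsilon$ (in the spirit of Section~\ref{sec:satisfiability}) at every timestamp other than the marker positions, and to let the single clock of $\A$ still check that the two markers of a probe are the required distance apart.

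The heart of the argument --- and the reason an extra trace quantifier is needed --- is that a synchronising probe can now only ``pinpoint'' a timestamp that is \emph{already present} in $\pi_a$, so a genuine error of the form ``a read $m^?$ at $t$ with no event whatsoever at $t-1$'' threatens to escape detection and yield a false positive. I would remove this loophole by re-encoding each pending message as a chain of unit-spaced events running from its write to its matching read, so that the skeleton is intrinsically closed under the relevant unit shift and $t-1$ is guaranteed to occur whenever a read occurs at $t$; the extra quantified trace $\pi_c$ is then used to carry the witness that makes the copy/consistency condition --- which formerly exploited the independent timeline --- expressible as a plain coincidence on the shared skeleton (e.g.\ $\weventually(q^1_b\wedge m^?_a)$ together with the clock-enforced spacing and the matching of symbols at the write-slot). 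Every discrepancy thereby surfaces as a \emph{symbol mismatch} at a timestamp that some probe can reach, rather than as a missing timestamp, and the resulting specification has the shape $\exists\pi_a\,\exists\pi_c\,\forall\pi_b\,(\dots)$, i.e.\ it lies in $\exists^\ast\forall^\ast$-\hmtl{}.

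I expect the main difficulty to be exactly this closure-and-detectability step: proving that the single model clock together with the extra trace really do force every needed shifted time point into the shared skeleton, so that each violation becomes observable, while \emph{not} over-constraining the skeleton so severely that legitimate halting computations (with an unboundedly filling channel) are ruled out. Getting both directions of correctness right under this constraint --- soundness (if the DCM does not halt, no choice of $\pi_a,\pi_c$ survives the universal probe) and completeness (a genuine halting run yields a passing witness) --- is the delicate part. The remaining ingredients are routine: the quantifier-free part still compiles to a deterministic one-clock object through the $\history$-based \ltlhistory{} translation, and the $\forall^\ast\exists^\ast$ case follows, as in Theorem~\ref{thm:efundecasync}, by passing to the negation of the formula.
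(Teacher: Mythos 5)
You correctly isolate the obstacle that distinguishes the synchronous case (a probe can only inspect timestamps already present in the shared skeleton, so a read at $t$ with no event at $t-1$ threatens to escape any conditional check), and your first steps --- padding the probe traces with silent $p^\epsilon$-events and pushing all timing into a single clock of the model so that the quantifier-free part stays untimed --- coincide with the paper's construction (the modified model $\A'$ has $\{p^\epsilon\}$-self-loops on the probe locations). But the step you defer as ``the delicate part'' is the entire content of the proof, and the two ideas you offer for it do not supply it. The chain re-encoding is circular: unit-spaced copy chains are themselves a family of \emph{punctual} constraints on unboundedly many events, which the one-clock finite-state model cannot enforce (it would have to track unboundedly many phases simultaneously), and which the specification cannot enforce without already possessing the very detection mechanism you are trying to build; so the skeleton is ``intrinsically closed under unit shift'' only for traces that honestly follow the encoding, which is exactly what cannot be assumed when the DCM does not halt. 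Worse, your helper quantifier has the wrong polarity: in $\exists\pi_a\,\exists\pi_c\,\forall\pi_b\,(\dots)$ the existentially chosen $\pi_c$ is on the prover's side, but exposing a cheating $\pi_a$ is inherently a universal task; and any fixed tuple of existential helper traces can impose only finitely many ``phases'' of unit-spaced time points, which bounds the number of messages in flight and collapses the reduction to a decidable bounded-channel problem.

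What the paper does instead is make \emph{both} extra quantifiers universal, $\phi' = \exists \pi_a \, \forall \pi_b \, \forall \pi_c \bigwedge_i \psi_i'$, and equip the model with four marker paths whose (non-singular) guards are $x \geq 1$, $x \leq 1$, $x > 1$, $x < 1$. Punctuality is recovered by \emph{coincidence of two universal probes}: if $p^1_b$ coincides with $p^2_c$ and $q^1_b$ with $q^2_c$ at a read, the guards $\geq 1$ and $\leq 1$ squeeze the spacing to exactly $1$, so the conditional formula $\psi_3'$ pins $m^!_a$ at $t-1$ \emph{whenever an event exists there}. The existence of that event is then forced contrapositively by the strict-guard probes: one formula forbids the immediate skeleton-predecessor of a read from lying more than $1$ before it, and another, $\weventually (q^3_b \wedge q^4_c \wedge \psi_\mathit{R}) \Rightarrow \neg \weventually (p^3_b \wedge \nextx p^4_c)$ (where $\psi_\mathit{R}$ marks a read of $\pi_a$), forbids two \emph{consecutive} skeleton events from straddling $t-1$; if both hold for all placements of $\pi_b,\pi_c$, an event at exactly $t-1$ is the only possibility left. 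Symmetric formulas handle the write-to-read direction, with silent events absorbing the boundary case at the end of $\pi_a$. Note that a one-clock probe can check only one bound relative to its last reset, which is why the $\geq 1$/$\leq 1$ and $>1$/$<1$ checks must be distributed over two distinct, simultaneously quantified traces --- this is precisely why the extra quantifier is needed and why it must be universal. This design also keeps every guard non-singular and reset-on-testing, which is what justifies the corresponding entries of Table~\ref{tab:mc}; your punctual ``$=1$'' guard in the model would forfeit that. As written, your proposal's soundness direction --- no witness survives the universal probe when the DCM does not halt --- fails, and repairing it leads back to the paper's two-universal-probe mechanism.
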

\begin{proof}
We use a modified model $\A'$ whose set of locations is $S \cup \{ s_1, s_2, s_3, s_4 \}$;
the transitions are similar to $\A$ in the proof of Theorem~\ref{thm:efundecasync}, but we now use a clock $x$ in the path
$s_0 \xrightarrow[x := 0]{\{p^1\}} s_1 \xrightarrow[x \geq 1, x:=0]{\{q^1\}} s_\mathit{halt}$, the paths $s_0 \xrightarrow[x := 0]{\{p^2\}} s_2 \xrightarrow[x \leq 1, x:=0]{\{q^2\}} s_\mathit{halt}$, $s_0 \xrightarrow[x := 0]{\{p^3\}} s_3 \xrightarrow[x > 1, x:=0]{\{q^3\}} s_\mathit{halt}$, and $s_0 \xrightarrow[x := 0]{\{p^4\}} s_4 \xrightarrow[x < 1, x:=0]{\{q^4\}} s_\mathit{halt}$ are added, and
$s_0 \xrightarrow{\{p^\mathit{read}\}} s_\mathit{halt}$ is removed.
Moreover, a self-loop labelled with $\{p^\epsilon\}$
is added to each of $s_0$, $s_1$, $s_2$, $s_3$, $s_4$, and $s_\mathit{halt}$. 
The specification is $\phi' = \exists \pi_a \, \forall \pi_b \, \forall \pi_c \bigwedge_{1 \leq i \leq 9} \psi_i'$ where
$\bigwedge_{1 \leq i \leq 9} \psi_i'$ is the following untimed \ltl{} formula:
\begin{itemize}
\item $\psi_1' = \weventually p^\mathit{end}_{a}$; 
\item $\psi_2' = \weventually (q^4_b \wedge \psi_\mathit{R}) \Rightarrow \neg \weventually (p^4_b \wedge p^\mathit{begin}_a)$ where
$\psi_\mathit{R} = \bigvee \{ m^?_a \mid m \in M \}$;
\item $\psi_3' = \bigwedge_{m \in M} \big( \weventually (q^1_b \wedge q^2_c \wedge m_a^?) \wedge \weventually (p^1_b \wedge p^2_c) \Rightarrow \weventually (p^1_b \wedge p^2_c \wedge m_a^!) \big)$;
\item $\psi_4' = \weventually (q^3_b \wedge \psi_\mathit{R}) \Rightarrow \neg \weventually (p^3_b \wedge \nextx q^3_b)$;
\item $\psi_5' = \weventually (q^3_b \wedge q^4_c \wedge \psi_\mathit{R}) \Rightarrow \neg \weventually (p^3_b \wedge \nextx p^4_c)$;
\item $\psi_6' = \weventually (p^4_b \wedge \psi_\mathit{W}) \Rightarrow \neg \weventually (q^4_b \wedge \neg \nextx \top)$ where
$\psi_\mathit{W} = \bigvee \{ m^!_a \mid m \in M \}$;
\item $\psi_7' = \bigwedge_{m \in M} \Big( \weventually (p^1_b \wedge p^2_c \wedge m_a^!) \wedge \weventually (q^1_b \wedge q^2_c) \Rightarrow \weventually \big(q^1_b \wedge q^2_c \wedge (m_a^? \vee p_a^\epsilon) \big) \Big)$;
\item $\psi_8' = \weventually (p^3_b \wedge \psi_\mathit{W}) \Rightarrow \neg \weventually (p^3_b \wedge \nextx q^3_b)$;
\item $\psi_9' = \weventually (p^3_b \wedge p^4_c \wedge \psi_\mathit{W}) \Rightarrow \neg \weventually (q^3_b \wedge \nextx q^4_c)$.
\end{itemize}
In this modified reduction, $\psi_1'$, $\psi_2'$ play similar roles as $\psi_1$, $\psi_2$ in the proof of Theorem~\ref{thm:efundecasync}. $\psi_3'$ ensures that if each $m^?$ at $t$ is preceded by an event at $t-1$, then $m^!$ must hold there.
$\psi_4'$ and $\psi_5'$ ensures that each $m^?$ at $t$ is actually preceded by an event at $t-1$.
The roles of $\psi_6'$, $\psi_7'$, $\psi_8'$, and $\psi_8'$ are analogous (note the use of silent events at
the end of $\pi_a$).
\end{proof}

\section{Decidable subcases}
While the negative results in the previous section may be disappointing,
we stress again that model checking alternation-free \hmtl{} 
is no harder than \mtl{} model checking, and it can in fact be carried
out with algorithms and tools for the latter. 
In any case, we now identify several subcases where model checking
is decidable beyond the alternation-free fragment.
\paragraph{Untimed model + untimed specification.}
The first case we consider is when both the model and the specification 
are untimed, and the asynchronous semantics is assumed (this case is simply \hltl{} model checking
in the synchronous semantics).
Our algorithm follows the lines of~\cite{ClarksonFKMRS14} and 
is essentially based on \emph{self-composition} (cf.~\cite{BartheDR11}, and many others; see the references in~\cite{ClarksonFKMRS14}) of the model;
the difficulty here, however, is to handle interleaving of events.
Let the model $\A$ be a finite automaton over $\Sigma_\AP$ and the specification be
a (untimed) closed \hmtl{} formula over $\AP$.
Without loss of generality, we assume the specification to be
$\phi = \exists \pi_1 \, \forall \pi_2 \, \dots \exists \pi_{k-1} \, \forall \pi_k \, \psi$,
which can be rewritten into $\exists \pi_1 \, \neg \exists \pi_2 \, \neg \dots \neg \exists \pi_{k-1} \, \neg \exists \pi_k \, \neg \psi$.
We start by translating $\mathit{stutter}(\neg \psi)$ (in which we
replace all occurrences of $\top_i$ with $\neg p^\epsilon_i$, i.e.~regarded here simply as
an \mtl{} formula over $(\AP_\epsilon)^k = \{ p_i \mid p \in \AP_\epsilon, 1 \leq i \leq k \}$) into the equivalent finite automaton
over $\Sigma_{(\AP_\epsilon)^k}$, and take its product with
(i) the automaton for $\wglobally(\bigvee_{1 \leq i \leq k} \neg p^\epsilon_i ) \wedge \big(\bigwedge_{1 \leq i \leq k} \wglobally(p^\epsilon_i \Rightarrow \bigwedge_{p \in \AP} \neg p_i)\big)$ and (ii) the automaton obtained from $\mathit{stutter}(\A)$ by extending the alphabet
to $\Sigma_{(\AP_\epsilon)^k}$ and renaming all the occurrences of $p$ to $p_k$, to obtain $\B$.
Now let $\mathcal{C}$ be the projection of $\B$ onto $(\AP_\epsilon)^{k-1} = \{ p_i \mid p \in \AP_\epsilon, 1 \leq i \leq k - 1 \}$ (this step corresponds to $\exists$ in $\neg \exists \pi_k$). By construction, $\B$ accepts only traces that are well-formed in
dimensions $1$ to $k - 1$, and so does $\mathcal{C}$; but $\mathcal{C}$ may accept traces containing $\{ p^\epsilon_i \mid 1 \leq i \leq k - 1\}$-events. 
We replace these events by $\epsilon$ (the `real' silent event, which can be removed with the standard textbook constructions, e.g.,~\cite{HopcroftUllman}) to obtain $\mathcal{C}'$.
Finally, we complement $\mathcal{C}'$ to obtain $\mathcal{C}''$ (this step corresponds to $\neg$ in $\neg \exists \pi_k$). 
We can then start over by taking the product of $\mathcal{C}''$,
the automaton for $\wglobally(\bigvee_{1 \leq i \leq k - 1} \neg p^\epsilon_i ) \wedge \big(\bigwedge_{1 \leq i \leq k - 1} \wglobally(p^\epsilon_i \Rightarrow \bigwedge_{p \in \AP} \neg p_i)\big)$, and the automaton obtained from
$\mathit{stutter}(\A)$ by extending the alphabet to $\Sigma_{(\AP_\epsilon)^{k-1}}$ and renaming all the occurrences
of $p$ to $p_{k-1}$; the resulting automaton is the new $\B$. We continue this process until
the outermost quantifier $\exists \pi_1$ is reached, when we test the emptiness of $\B$ (at this point, it is an automaton
over $\Sigma_{\AP_\epsilon}$).

\begin{proposition}\label{prop:purelyuntimed}
Model checking \hmtl{} is decidable when the model and the specification are both untimed.
\end{proposition}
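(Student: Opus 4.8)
The plan is to argue that the self-composition procedure described above is effective and computes exactly the answer to $\sem{\A} \models \phi$, so that a single final emptiness test decides the instance. The conceptual starting point is that, because $\A$ is untimed, $\sem{\A}$ is closed under arbitrary re-timing of its events: whenever a timed word is accepted, so is every timed word with the same sequence of labels. Consequently the asynchronous satisfaction of an untimed $\phi$ over $\sem{\A}$ depends only on the relative interleaving of the quantified traces, never on the concrete timestamps, and the silent events $p^\epsilon$ serve precisely to record an interleaving grid: a well-formed word over $\Sigma_{(\AP_\epsilon)^k}$ encodes a shuffle of $k$ traces, with a $p^\epsilon_i$ marking the positions where trace $\pi_i$ contributes no event.

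First I would make the self-composition precise and fix the invariant to be maintained. After rewriting $\phi$ as $\exists \pi_1 \, \neg \exists \pi_2 \, \neg \dots \neg \exists \pi_k \, \neg \psi$, the claim is that the automaton $\B$ obtained at level $j$ (over $\Sigma_{(\AP_\epsilon)^j}$) accepts exactly the well-formed $j$-tuples of stuttered $\A$-traces whose synchronous reading satisfies the residual body of the remaining prefix. The self-loops added in $\mathit{stutter}(\A)$ are what supply the interleaving freedom, while the well-formedness conjuncts $\wglobally(\bigvee_{i} \neg p^\epsilon_i)$ and $\bigwedge_{i} \wglobally(p^\epsilon_i \Rightarrow \bigwedge_{p \in \AP} \neg p_i)$ guarantee that each accepted word decodes to a genuine tuple of unpadded traces. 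I would then verify, by induction on the number of quantifier blocks processed, that projecting away dimension $j$ realises $\exists \pi_j$, that complementation realises the preceding $\neg$, and that re-imposing the membership constraint by composing with a fresh renamed copy of $\mathit{stutter}(\A)$ for dimension $j-1$ re-establishes the invariant one level down.

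The step I expect to require the most care is the passage $\mathcal{C} \to \mathcal{C}'$, i.e.~the treatment of silent events across a quantifier boundary. The counterexample following Lemma~\ref{lem:stuttering} shows that one cannot simply stutter the whole formula once and read it synchronously: a trace quantified later may be forced onto a grid already committed to by an earlier trace, losing the intended alignments. Here the resolution is that, immediately after projecting out $\pi_j$, any position whose surviving dimensions are all silent carries no information about $\pi_1, \dots, \pi_{j-1}$ and must be turned into a genuine $\epsilon$-transition and removed; this returns the remaining tuple to its canonical unpadded form, so that the next quantified trace can be re-stuttered \emph{independently} against it. I would prove that after this clean-up $\mathcal{C}'$ accepts precisely the well-formed $(j-1)$-tuples for which some stuttered $\A$-trace $\pi_j$ witnesses the body, whence complementation faithfully realises $\neg \exists \pi_j$; combined with the time-closure of $\sem{\A}$, this is exactly what rescues correctness in the presence of alternation.

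Finally, decidability is immediate once correctness is established: every operation used---product, projection, $\epsilon$-elimination, complementation, and emptiness---is effective on finite automata, and the procedure performs one round per quantifier before terminating with the emptiness test for $\exists \pi_1$. The cost is a tower of exponentials in the number of alternations, matching the known situation for \hltl{} model checking.
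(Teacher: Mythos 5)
Your proposal follows essentially the same route as the paper: rewrite the quantifier prefix using negations and existentials, stutter the body and compose it with the well-formedness constraints and a renamed copy of $\mathit{stutter}(\A)$, then alternate projection, replacement of all-silent positions by genuine $\epsilon$-transitions, and complementation, one round per quantifier, ending with an emptiness test. You correctly isolate the same key subtlety the paper handles (the $\mathcal{C} \to \mathcal{C}'$ clean-up so that each newly quantified trace can be stuttered independently), and your added invariant-based correctness induction and the time-closure observation for untimed $\A$ are consistent elaborations of the paper's argument rather than a departure from it.
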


\paragraph{One clock + one alternation.}
The algorithm outlined in the previous case crucially depends on the fact that both $\A$ and $\phi$ are untimed,
hence their product (in the sense detailed in the previous case) can be complemented.
When the synchronous semantics is assumed and there are is only one quantifier alternation in $\phi$,
it might be the case that we do not actually need complementation.
For example, if $\A$ is untimed and $\phi = \forall \pi_a \, \exists \pi_b \, \exists \pi_c \, \psi$
where $\psi$ translates into a one-clock \ta{}, the corresponding model-checking problem
clearly reduces to universality for one-clock \ta{s}, which is decidable but non-primitive
recursive~\cite{AbdullaDOQW08}.\footnote{This case is undecidable in the asynchronous semantics
by Theorem~\ref{thm:efundecasync}; as explained above, the algorithm may introduce $\epsilon$-transitions
in the asynchronous semantics, while universality for one-clock \ta{s} with $\epsilon$-transitions is undecidable~\cite{AbdullaDOQW08}.}
This observation applies to other cases as well, such as when $\A$ is a one-clock \ta{}
and $\phi = \exists \pi_a \, \forall \pi_b \, \psi$ where $\psi$ is untimed;
here model checking reduces to language inclusion between two one-clock \ta{s}.

\paragraph{Untimed model + \mia{} specification.}
The main obstacle in applying the algorithm above to larger fragments of
\hmtl{}, as should be clear now, is that universal quantifiers amount to complementations,
which are not possible in general in the case of \ta{s}.
Moreover, we note that the usual strategy of restricting to deterministic models and specifications
does not help, as the projection step in the algorithm
necessarily introduces non-determinism.
To make the algorithm work for larger fragments, we essentially need a class of automata that is both
\emph{closed under projection} and \emph{complementable}.
Fortunately, there is a subclass of one-clock \ta{s} that satisfies these conditions.
We consider two additional restrictions on one-clock \ta{s}: 
\begin{itemize}
\item Non-Singular (NS): a one-clock \ta{} is NS if all the guards are non-singular.
\item Reset-on-Testing (RoT): a one-clock \ta{} is RoT if whenever the guard
of a transition is not $\top$, the single clock $x$ must be reset on that transition.
\end{itemize}
One-clock \ta{s} satisfying both NS and RoT are called \emph{metric interval automata} (\mia{s}), which 
are determinisable~\cite{Ferrere18}.
Since the projection operation cannot invalidate NS and RoT,
the algorithm above can be applied when the synchronous semantics is assumed,
$\A$ is untimed, $\psi$ or $\neg \psi$ translates to a \mia{}, and only one complementation is involved; in this case it runs in elementary time.
\begin{proposition}\label{prop:mia}
Model checking $\forall^\ast \exists^\ast$-\hmtl{} ($\exists^\ast \forall^\ast$-\hmtl{}) is decidable in the synchronous semantics when the model is untimed and
$\psi$ ($\neg \psi$) translates into a \mia{} in the specification $\phi = \forall \pi_1 \, \dots \exists \pi_k \, \psi$ ($\phi = \exists \pi_1 \, \dots \forall \pi_k \, \psi$). 
\end{proposition}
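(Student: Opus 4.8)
The plan is to reuse the self-composition algorithm underlying Proposition~\ref{prop:purelyuntimed}, but to run it over the class of \mia{s} rather than finite automata. I concentrate on the $\forall^\ast\exists^\ast$ case, the $\exists^\ast\forall^\ast$ case being dual. Writing $\phi = \forall\pi_1\dots\forall\pi_j\,\exists\pi_{j+1}\dots\exists\pi_k\,\psi$, I first pass (via Lemma~\ref{lem:syncinasync}) to its synchronous encoding, so that all $k$ quantified tracks carry the same timestamps. I then translate the promised \mia{} for $\psi$ over the product alphabet $(\AP)^k$ and take its product with copies of the untimed model $\A$ reading the existential tracks $j+1,\dots,k$, together with the synchronisation constraints coming from $\mathit{sync}(\cdot)$. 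Since $\A$ carries no clocks and contributes no guards, this product is again a one-clock \mia{} $\B$, recognising exactly those tuples in which tracks $j+1,\dots,k$ are model traces synchronised with the rest and $\psi$ holds.

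The crux is that a single quantifier alternation needs only a single complementation. I project $\B$, one track at a time, onto the first $j$ coordinates; because projection merely hides atomic propositions and leaves the single clock and its non-singular, reset-on-testing guards untouched, NS and RoT survive and every intermediate automaton remains a \mia{}. This yields a \mia{} recognising $\sem{\exists\pi_{j+1}\dots\exists\pi_k\,\psi}$ over $(\AP)^j$. Now $\sem{\A}\models\phi$ holds iff every synchronised $j$-tuple of model traces satisfies this inner formula, i.e.\ iff intersecting its complement with the model on tracks $1,\dots,j$ is empty. I therefore complement the \mia{} exactly once --- here the determinisability of \mia{s}~\cite{Ferrere18} is indispensable, as the determinised automaton can be completed and its accepting set swapped while staying a \mia{} --- then take the product with $j$ copies of $\A$ plus synchronisation, and test the resulting \mia{} for emptiness (decidable, as for any \ta{}). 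The $\exists^\ast\forall^\ast$ case is symmetric: starting from the \mia{} for $\neg\psi$, I project away the inner tracks (existential after pushing $\neg$ through the universal block), complement once to realise $\forall\pi_{j+1}\dots\forall\pi_k\,\psi$, intersect with model copies on the outer tracks, and test for non-emptiness.

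Two points deserve care, and I expect the bookkeeping around them to be the main obstacle. The first is closure of the \mia{} class under every operation used: product with finite (clock-free) automata and with the synchronisation constraints, which introduce no guards; projection, which preserves NS and RoT; and a single complementation, licensed by determinisability --- exactly the properties anticipated in the paragraph preceding the statement, together with the verification that one alternation really does require only one complementation. The second, and the real reason the synchronous semantics is needed, is that here projection does \emph{not} introduce $\epsilon$-transitions: since all tracks synchronise, hiding a coordinate only forgets its labels without deleting any event, so the determinisation needed for the lone complementation applies. (In the asynchronous semantics the analogous projection would create silent events that must become genuine $\epsilon$-transitions, and universality --- hence complementation --- for one-clock \ta{s} with $\epsilon$-transitions is undecidable, matching Theorem~\ref{thm:efundecasync}.) Elementariness then follows, since the procedure performs a single elementary determinisation and only a bounded number of polynomial product and projection steps.
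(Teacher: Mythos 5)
Your proposal is correct and takes essentially the same route as the paper: it instantiates the self-composition algorithm of Proposition~\ref{prop:purelyuntimed} with \mia{s}, relying on their closure under projection and under product with untimed automata, on determinisability~\cite{Ferrere18} for the single complementation that one quantifier alternation requires, and on the fact that the synchronous semantics keeps projection free of $\epsilon$-transitions. This matches the paper's own (more tersely stated) argument, including the asymmetry that $\psi$ must be a \mia{} in the $\forall^\ast\exists^\ast$ case and $\neg\psi$ in the $\exists^\ast\forall^\ast$ case, and the elementary-time claim.
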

On the other hand, we can adapt the proof of Theorem~\ref{thm:efundecsync}
to show that model checking an untimed model against
an $\exists^\ast \forall^\ast$-\hmtl{} specification $\phi$ in the synchronous semantics, when the quantifier-free part $\psi$ (instead of $\neg \psi$)
translates into a \mia{}, remains undecidable.

\begin{proposition}\label{prop:effundecns}
Model checking $\exists^\ast \forall^\ast$-\hmtl{} is undecidable in the synchronous semantics when the model is untimed and $\psi$ in the specification $\phi = \exists \pi_1 \, \dots \forall \pi_k \, \psi$ translates into a \mia{}. 
\end{proposition}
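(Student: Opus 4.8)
The plan is to reduce the halting problem for deterministic channel machines exactly as in the proof of Theorem~\ref{thm:efundecsync}, but now transferring \emph{all} metric information out of the model and into the specification. First I would take the model $\A$ to be the \emph{untimed} finite automaton underlying the $\A'$ of that proof: the same locations $S \cup \{ s_1, s_2, s_3, s_4 \}$, the same marker-labelled probe branches $s_0 \to s_i \to s_\mathit{halt}$ (now with every clock guard and reset deleted), and the same $\{ p^\epsilon \}$ self-loops that make synchronisation possible in the synchronous semantics. In this way $\A$ merely emits the markers $p^i,q^i$ in the correct order along the correct branch, while committing to nothing about the delays between them, so that the untimed model carries no timing at all.

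The metric content of $\A'$ (the four probe gaps $x \geq 1$, $x \leq 1$, $x > 1$, $x < 1$) would be recovered inside the quantifier-free part $\psi$ by replacing each former model guard with a corresponding history subformula: the branch once constrained by $x \geq 1$ is pinned down by $\history_{\geq 1}$ evaluated at the relevant $q^i$, and analogously by $\history_{\leq 1}$, $\history_{> 1}$, and $\history_{< 1}$ on the other branches. The crucial point is that every one of these intervals is \emph{non-singular} and each such subformula has a pure-past, point inner formula, so $\psi$ remains within \ltlhistory{} with non-singular history operators; by the same argument used in the proof of Theorem~\ref{thm:efundecasync} it then translates into a one-clock deterministic \ta{} that additionally satisfies NS and RoT, i.e.\ a \mia{}. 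The exact unit delay needed to encode that each $m^?$ at $t$ is preceded by an $m^!$ at $t-1$ is \emph{not} asserted by any single guard of this automaton; as in Theorem~\ref{thm:efundecsync}, it emerges only from the synchronous product of the two universally-quantified probe traces $\pi_b$ and $\pi_c$, one supplying a lower and the other an upper bound, so that no singular guard ever appears in $\psi$ itself. Keeping $\psi$ (rather than $\neg\psi$) in \mia{} form is precisely what puts this case outside the algorithm behind Proposition~\ref{prop:mia}, whose decidability argument for $\exists^\ast\forall^\ast$ requires the \emph{innermost} compiled automaton, which is the one for $\neg\psi$, to be a \mia{}.

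The remaining conjuncts would be carried over essentially verbatim from the proof of Theorem~\ref{thm:efundecsync}, with $\psi_1'$ and $\psi_2'$ selecting a type-($1$) trace as $\pi_a$ and with the $\nextx$-adjacency tests on the $p^3/q^3$ and $p^4/q^4$ branches doing the work of the strict bounds. Setting $\phi = \exists \pi_a \, \forall \pi_b \, \forall \pi_c \, \psi$, one then checks, exactly as before, that $\sem{\A} \models^\mathit{sync} \phi$ holds if and only if the channel machine admits a halting computation, which yields undecidability.

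The hard part will be verifying that the \emph{whole} of $\psi$, and not merely each gap subformula in isolation, genuinely translates into a \emph{single-clock} \mia{}. The danger is that the synchronisation conjuncts force the lower-bound probe on $\pi_b$ and the upper-bound probe on $\pi_c$ to measure the \emph{same} window $[t_p,t_q]$, so a naive compositional translation would reset two copies of the clock over one interval and hence need two clocks; worse, their conjunction would collapse to the singular guard $x = 1$, which no \mia{} can test. I would resolve this by arranging the markers so that at most one $\history_I$ subformula is ever actively measuring at any instant of an accepting run, reusing the strict-bound-plus-$\nextx$ mechanism of $\psi_4'$, $\psi_5'$, $\psi_8'$, $\psi_9'$ to discharge one of the two directions through event adjacency rather than through a second clock. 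With the probes staggered in this way a single clock, reset on every nontrivial test, suffices, and the NS and RoT conditions are preserved throughout, so $\psi$ is a bona fide \mia{}.
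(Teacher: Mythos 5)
Your overall strategy is the same as the paper's: start from the reduction of Theorem~\ref{thm:efundecsync}, strip the clock from the model, re-express the probe-branch guards as non-singular $\history_I$ subformulae inside the hypotheses of the quantifier-free part, and let pairs of universally quantified probe traces pin down the exact unit delay in the synchronous semantics. However, there is a genuine gap exactly at the step you yourself flag as ``the hard part'': you never exhibit a concrete $\psi$ that provably translates into a \mia{}, and your proposed fix --- ``arranging the markers so that at most one $\history_I$ subformula is ever actively measuring at any instant'' --- is a restatement of the property you need, not a construction. Worse, the specific design you commit to (coinciding markers $p^1_b \wedge p^2_c$ and $q^1_b \wedge q^2_c$ carrying $\history_{\geq 1}$ on $\pi_b$ and $\history_{\leq 1}$ on $\pi_c$) is precisely the configuration the paper avoids: it is the one that threatens either a singular test or a second clock, and since establishing the \mia{} bound on $\psi$ (not $\neg\psi$) is the entire content of the proposition --- it is what separates this undecidability claim from the decidable case of Proposition~\ref{prop:mia} --- leaving it unresolved leaves the proof incomplete.

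The paper's sketch resolves this by a different formula design, which you would need to reproduce or replace with an equally explicit one. It keeps only the strict-bound probes and makes their start markers \emph{staggered} rather than coincident: the hypothesis asks for $q^3_b \wedge q^4_c \wedge m^?_a$ together with $\history_{>1} p^3_b$ and $\history_{<1} p^4_c$ (so $p^3_b$ lies strictly before time $t-1$ and $p^4_c$ strictly after), it adds a one-event branch $s_0 \xrightarrow{\{q^5\}} s_\mathit{halt}$ to cover the boundary case where no event of $\pi_a$ falls in $(t-1,t)$, and the \emph{conclusion} squeezes the write between the two markers purely untimedly, via $\weventually\big(p^3_b \wedge \eventually (m^!_a \wedge \eventually p^4_c)\big)$. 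The payoff of this arrangement is structural: all timed subformulae sit in hypotheses, so the automaton for $\psi$ only ever needs to \emph{refute} them, i.e.\ verify the negated tests ($\leq 1$ on $\pi_b$'s window or $\geq 1$ on $\pi_c$'s window), each of which is non-singular; and because refuting a single one of these tests already discharges the hypothesis for every $m \in M$, any single run of the automaton needs at most one such test over one window, which is exactly what one clock with reset-on-testing can do. Your sketch, by contrast, never establishes that no run of your automaton is forced to measure two interleaved windows (e.g.\ one obligation from your analogue of $\psi_4'$ on $\pi_b$'s window and one from your analogue of $\psi_9'$ on $\pi_c$'s window), and without that argument the claim ``$\psi$ is a bona fide \mia{}'' is unsupported.
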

\begin{proof}[sketch] Similar to the proof of Theorem~\ref{thm:efundecsync}:
we add a new initial location $s_0'$ and the transitions $s_0' \xrightarrow{\{p^\epsilon\}} s_0$, 
self-loops labelled with $\{p^\epsilon\}$ to $s_0$ and $s_\mathit{halt}$,
and $s_0 \xrightarrow{\{p^3\}} s_3 \xrightarrow{\{q^3\}} s_\mathit{halt}$, $s_0 \xrightarrow{\{p^4\}} s_4 \xrightarrow{\{q^4\}} s_\mathit{halt}$, $s_0 \xrightarrow{\{q^5\}} s_\mathit{halt}$.
In the quantifier-free part $\psi''$ of the specification $\phi'' = \exists \pi_a \, \forall \pi_b \, \forall \pi_c \, \psi''$ we use, e.g.,~$\bigwedge_{m \in M} \Big( \big(\weventually (q^3_b \wedge q^4_c \wedge m_a^?) \wedge \weventually (q^3_b \wedge \history_{>1} p^3_b) \wedge \weventually (q^4_c \wedge \history_{<1} p^4_c) \big) \vee \big( \weventually (q^3_b \wedge q^5_c \wedge m_a^?) \wedge \weventually (q^3_b \wedge \history_{>1} p^3_b) \big) \Rightarrow \weventually \big(p^3_b \wedge \eventually (m_a^! \wedge \eventually p^4_c )\big) \vee \big(p^3_b \wedge \eventually (m_a^! \wedge \eventually q^5_c )\big) \Big)$. Finally, note that $\psi''$ can be translated into a \mia{}.
\end{proof}
The decidability results in the synchronous semantics are summarised in Table~\ref{tab:mc}.

\begin{table}[h]
\centering
\begin{tabular}{ccccc}
\toprule
\diagbox{Model}{Spec.} & untimed & NS$+$RoT & NS & RoT \\
\otoprule
untimed & \makecell{Dec.\\(Proposition~\ref{prop:purelyuntimed})} & \makecell{Dec. for $\forall^\ast \exists^\ast$\\(Proposition~\ref{prop:mia})} & \makecell{Undec. for $\exists \forall \forall$ \\(Proposition~\ref{prop:effundecns})} & \makecell{Undec. for $\exists \forall \forall$ \\(Proposition~\ref{prop:effundecns})}  \\
\midrule
NS+RoT & \makecell{Undec. for $\exists \forall \forall$ \\(Theorem~\ref{thm:efundecsync})} & Undec. & Undec. & Undec. \\
\midrule
NS & Undec. & Undec. & Undec. & Undec. \\
\midrule
RoT & Undec. & Undec. & Undec. & Undec. \\
\bottomrule
\end{tabular}
\caption{Decidability of model checking untimed or one-clock \ta{s} against (one-clock) \hmtl{}
in the synchronous semantics; NS stands for Non-Singular constraints and RoT stands for Reset-on-Testing.}
\label{tab:mc}
\end{table}

%

\paragraph{Bounded time domains.}

We end this section by showing that when there is an \emph{a priori} bound $N$ (where $N$ is a positive integer) on the
length of the time domain, the model-checking problem for full \hmtl{} becomes decidable;
in fact, in the case of synchronous semantics it reduces to the satisfiability problem for \qptl{}~\cite{Sistla1985}.
From a practical point of view, this implies that \emph{time-bounded} \hmtl{}
verification (at least for the $\exists^\ast \forall^\ast$-fragment, say) can be carried out with highly efficient, off-the-shelf tools that works
with \ltl{} and (untimed) automata, such as SPOT~\cite{Duret-Lutz2016}, GOAL~\cite{TsaiTH13}, and \texttt{Owl}~\cite{KretinskyMS18}.

We assume the asynchronous semantics.
For a given $N$,
we consider all traces in which all timestamps are less than $N$.
Denote by $\sem{\A}_{[0, N)}$ the set of all such traces in $\sem{\A}$;
the model-checking problem then becomes deciding whether $\sem{\A}_{[0, N)} \models \phi$.
As before, we assume $\phi$ to be
$\exists \pi_1 \, \neg \exists \pi_2 \, \neg \dots \neg \exists \pi_{k-1} \, \neg \exists \pi_k \, \neg \psi$.
Following~\cite{Ouaknine2009, Ho14}, we can 
use the \emph{stacking construction} to obtain, from the conjunction $\psi'$ of $\mathit{stutter}(\neg \psi)$
and $\wglobally(\bigvee_{\pi \in \mathcal{Q}} \neg p^\epsilon_\pi ) \wedge \big(\bigwedge_{\pi \in \mathcal{Q}} \wglobally(p^\epsilon_\pi \Rightarrow \bigwedge_{p \in \AP} \neg p_\pi)\big)$,
an equi-satisfiable untimed (\qptl{}) formula $\overline{\phi} = \exists W \, \overline{\psi'}$ over the stacked alphabet $\overline{(\AP_\epsilon)^k} \cup \overline{Q}$
(where $\overline{(\AP_\epsilon)^k} = \{ p_{i, j} \mid p \in \AP_\epsilon\text{, } 1 \leq i \leq k \text{, } 0 \leq j < N \}$ and $\overline{Q} = \{ q_j \mid 0 \leq j < N \}$). 
We apply the following modifications to $\overline{\phi}$ to obtain $\overline{\phi'}$:
\begin{itemize}
\item introduce atomic propositions $\{ p^\epsilon_i  \mid 1 \leq i \leq k \}$ and add the formula $\bigwedge_{1 \leq i \leq k} \wglobally \Big( \big(\bigwedge_{0 \leq j < N} (q_j \Rightarrow p^\epsilon_{i, j}) \big) \Leftrightarrow p^\epsilon_i \Big)$ as a conjunct;
\item introduce atomic propositions $\{ q_{i, j} \mid 1 \leq i \leq k \text{, } 0 \leq j < N \}$ and add the formula
$\bigwedge_{1 \leq i \leq k} \wglobally \big( \bigwedge_{0 \leq j < N} ( \neg p^\epsilon_i \wedge q_j \Leftrightarrow q_{i, j} ) \big)$ as a conjunct;
\item project away $\{p^\epsilon_{i, j} \mid 1 \leq i \leq k \text{, } 0 \leq j < N \}$ and $\overline{Q}$.
\item replace all occurrences of $p^\epsilon_i$ by $\bot_i$.
\end{itemize}
Now, as we mentioned earlier, we can write $\A$ as an (\msoone~\cite{Ouaknine2009}) formula
$\phi_\A = \exists X_\A \, \psi_\A$ where $X_\A$ is a set of atomic propositions such that $\AP \cap X_\A = \emptyset$
and $\psi_\A$ is an \mtl{} formula over $\AP \cup X_\A$.
Let $\overline{\phi_\A}$ be its stacked counterpart $\exists \overline{X_\A} \, \exists Y \, \overline{\psi_\A}$;
we translate $\overline{\phi_\A}$ back into an untimed automaton $\overline{\A}$ over the stacked alphabet $\overline{\AP} \cup \overline{Q}$.
The problem thus reduces to untimed model checking of $\overline{\A}$ against $\exists \pi_1 \, \forall \pi_2 \, \dots \exists \pi_{k-1} \, \forall \pi_k \, \overline{\phi'}$ in the asynchronous semantics, which is decidable by Proposition~\ref{prop:purelyuntimed} ($\overline{\phi'}$ has outermost existential propositional quantifiers, but clearly the equivalent automaton can be used 
directly in the algorithm).

Finally, note that the proof is simpler for the case of synchronous semantics: we can simply
work with a (non-stuttering) \msoone{} formula in all the intermediate steps without translating it into an automaton,
and then check the satisfiability of the final formula by stacking it into a \qptl{} formula.

\begin{proposition}
Model checking \hmtl{} is decidable
when the time domain is $[0, N)$, where $N$ is a given positive integer.
\end{proposition}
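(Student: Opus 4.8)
The plan is to eliminate all timing via the \emph{stacking construction}~\cite{Ouaknine2009, Ho14} and reduce to the purely untimed case settled in Proposition~\ref{prop:purelyuntimed}. I treat the asynchronous semantics first, assuming $\phi = \exists \pi_1 \, \neg \exists \pi_2 \, \neg \dots \neg \exists \pi_{k-1} \, \neg \exists \pi_k \, \neg \psi$ and that every timestamp lies in $[0, N)$. The guiding observation is that, since all constraining intervals have integer endpoints and the horizon is bounded by $N$, the truth of any \mtl{} subformula at a point depends only on which of the $N$ unit slices $[0,1), \dots, [N-1, N)$ the point occupies together with the relative order of the fractional parts of the relevant timestamps. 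Stacking captures exactly this: each proposition $p$ is replaced by $N$ copies $p_j$ (one per slice), so that one \emph{stacked} position encodes a whole column of up to $N$ real positions sharing a common fractional part, with markers $q_j$ recording which slices are occupied.

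First I would stack the conjunction $\psi'$ of $\mathit{stutter}(\neg\psi)$ with the well-formedness constraints to obtain the equi-satisfiable \qptl{} formula $\overline{\phi} = \exists W \, \overline{\psi'}$ over the stacked alphabet. The bulk of the work is then to re-encode the silent events $p^\epsilon$ faithfully through the stacking, since these drive the interleaved (asynchronous) reading of traces yet are not themselves real events. This is what the four modifications above accomplish: I introduce the unstacked markers $p^\epsilon_i$ linked to the occupied slices through the $q_j$'s, introduce refined markers $q_{i,j}$ recording occupancy \emph{after} silent positions are discarded, project away the auxiliary stacked propositions together with the slice markers $\overline{Q}$, and finally substitute $\bot_i$ for $p^\epsilon_i$ so that $\top_i$ denotes genuine occupancy in dimension $i$. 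The outcome $\overline{\phi'}$ is purely untimed, and its outermost propositional quantifiers can be absorbed into the equivalent automaton.

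In parallel I would process the model: write $\A$ as an \msoone{} formula $\phi_\A = \exists X_\A \, \psi_\A$, stack it into $\overline{\phi_\A} = \exists \overline{X_\A} \, \exists Y \, \overline{\psi_\A}$, and translate the latter back into an untimed automaton $\overline{\A}$ over $\overline{\AP} \cup \overline{Q}$. With timing removed on both sides, the query $\sem{\A}_{[0, N)} \models \phi$ becomes the untimed instance $\overline{\A} \models \exists \pi_1 \, \forall \pi_2 \, \dots \exists \pi_{k-1} \, \forall \pi_k \, \overline{\phi'}$, decided by Proposition~\ref{prop:purelyuntimed}. The synchronous case is easier: quantified traces already share timestamps, so no silent-event bookkeeping is needed; there I would carry a single (non-stuttering) \msoone{} formula through all intermediate steps and only stack into \qptl{} at the end, deciding the resulting satisfiability question directly via~\cite{Sistla1985}.

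The hard part will be establishing that this stacking-plus-silent-event encoding is \emph{faithful}, i.e.~that the four modifications exactly reproduce, under the bounded horizon, the asynchronous meaning of $\top_\pi$, $p_\pi$, and the interleaving introduced by $\mathit{stutter}(\cdot)$, and that projecting away $\overline{Q}$ never conflates stacked positions arising from genuinely distinct real configurations. I expect this to require a careful correspondence between timed words with timestamps in $[0, N)$ and their stacked encodings, together with a structural induction matching \qptl{} satisfaction over stacked words to \mtl{} satisfaction over the originals; the delicate (but ultimately routine) points there are the treatment of strict versus non-strict interval endpoints at slice boundaries and the consistent alignment of fractional-part orderings across the $k$ trace dimensions.
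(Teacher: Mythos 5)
Your proposal follows essentially the same route as the paper's own proof: the stacking construction applied to the conjunction of $\mathit{stutter}(\neg\psi)$ with the well-formedness constraints, the same four modifications handling silent events (linking $p^\epsilon_i$ to slice occupancy via the $q_j$'s, refining to $q_{i,j}$, projecting away the auxiliary stacked propositions and $\overline{Q}$, and substituting $\bot_i$), the same treatment of the model via an \msoone{} encoding stacked and translated back to an untimed automaton $\overline{\A}$, the reduction to Proposition~\ref{prop:purelyuntimed}, and the simpler synchronous case handled by stacking directly into \qptl{}. The only difference is that you explicitly flag the faithfulness of the encoding (slice boundaries, fractional-part alignment) as requiring a detailed correspondence argument, which the paper likewise leaves implicit, deferring to~\cite{Ouaknine2009, Ho14}.
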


\section{Conclusion}

We studied the satisfiability and model-checking problems
for \hmtl{} over sets of timed words.
While satisfiability can be solved
similarly as for \hltl{}, model checking turned out to
be undecidable when the specification involves at least one quantifier alternation; this holds even for otherwise fairly restricted models and specifications.
On the other hand, we showed that model checking beyond the alternation-free fragment is possible if (i) interleaving of events in different traces is disallowed, or (ii) the time domain is $N$-bounded for a fixed positive integer $N$.
We leave as future work to investigate whether
a suitable notion of `timing fuzziness' (e.g.,~\cite{AlurTM05, GuptaHJ97, DonzeM10}) can be incorporated, either to recover decidability of model checking or
better align with practical applications, e.g.,~monitoring of cyber-physical systems~\cite{BonakdarpourDP18, BartocciDDFMNS18}.
Another possible direction is to consider the case
where the number of events~\cite{LorberRNA17} (or more generally, the number
of events in any interval of fixed length~\cite{Wilke1994, FuriaR08}) in any trace is bounded, which may be sufficient for
modelling many real-world systems. 

\bibliographystyle{splncs04}
\bibliography{refs}

\end{document}